\documentclass[runningheads]{llncs}
\usepackage{amsmath}
\usepackage{amssymb}
\usepackage{amsthm} 
\usepackage{graphicx}
\usepackage{algpseudocode}
\usepackage{algorithm}
\usepackage{caption}
\usepackage{times}
\usepackage{setspace}
\usepackage{multicol}
\usepackage{wrapfig}
\usepackage{xcolor}
\usepackage{tikz}
\usetikzlibrary{backgrounds}
\usepackage{adjustbox}
\usepackage{amsthm}
\usepackage{setspace}
\usepackage{enumitem}

\floatname{algorithm}{}
\begin{document}
\title{A Fine-Grained Complexity of Co-Secure Domination for Some Subclasses of Chordal Graphs}
 \titlerunning{CSDS for some subclasses of chordal graphs}
\author{M S Manjusha \and P. Renjith}
\institute{National Institute of Technology Calicut
\email{manjusha\_p220226cs@nitc.ac.in,renjith@nitc.ac.in}}
\maketitle

\begin{abstract}
For a connected graph $G = (V, E)$, a set $D \subseteq V$ is a co-secure dominating set if $D$ is a dominating set of $G$ and for each vertex $u \in D$ there exists a vertex $v \in V \setminus D$ such that $uv \in E$ and $(D \setminus \{u\}) \cup \{v\}$ is a dominating set of $G$. 
In this article, we present bounds on the co-secure domination number (size of minimum co-secure dominating set) in some restricted 2-trees, a popular subclass of chordal graphs.  We also show an interesting dichotomy that the co-secure dominating set problem is NP-complete on $K_{1,r}$-free split graphs, $r\ge 4$, whereas it is linear-time solvable on $K_{1,3}$-free split graphs.     
  
\keywords{Dominating set, Co-Secure Dominating set, 2-trees, Split graphs}
\end{abstract}
\section{Introduction}

\hspace{0.75cm}Dominating set problem is well known to be NP-complete in bipartite graphs\cite{Alan}, split graphs\cite{Alan}, chordal graphs\cite{DS_Chordal}, chordal bipartite graphs\cite{DS_ConvexChordal} and has a polynomial-time solution in trees\cite{MDtree} and interval graphs\cite{MDInterval}.  \textit{Co-secure dominating set}(CSDS) $D \subseteq V$ of a connected graph $G = (V,E)$ is a dominating set such that for every vertex $u \in D$ there exists a vertex $v \in V \setminus D$ adjacent to $u$ such that $(D \setminus \{u\}) \cup \{v\}$ is also a dominating set of $G$.  Here, the vertex $v$ is said to be a \textit{co-vertex} of $u$ and \textit{$v$ is a replacement of $u$} for the set $D$. The \textit{minimum co-secure dominating set problem} (MCSDS) is to find a co-secure dominating set of minimum cardinality of the input graph $G$. The minimum cardinality of a co-secure dominating set in $G$ is the \textit{co-secure domination number $\gamma_{cs}(G)$} of $G$. Given a connected graph $G = (V, E)$ and a positive integer $k$, the decision version of this problem, \textit{co-secure dominating set decision} (CSDSD) is to decide whether $G$ has a co-secure dominating set of cardinality at most $k$.

In \cite{co-secureDS}, Arumugam et al. introduced the co-secure dominating set problem and determined the co-secure domination number for some families of graph classes such as paths, cycles, wheels and complete t-partite graphs. In addition, they showed that CSDSD is NP-complete for bipartite graphs, chordal graphs, and planar graphs.  In \cite{Joseph}, Joseph et al. gave some bounds on the co-secure domination number for certain families of graphs. In \cite{manjusha1}, P. Manjusha and M.R Chithra characterized the Mycielski graphs with the co-secure domination number $2$ or $3$ and gave a sharp upper bound for $\gamma_{cs}(\mu(G))$, where $\mu(G)$ is the mycielski of a graph $G$. Later in \cite{Manjusha2}  Manjusha P and Radha Rajamani Iyer determined the tight bound for the co-secure domination number on the generalized Sierpinski cycle graphs and generalized Sierpinski complete graphs.

In \cite{intervak}, Zou et al. proved that the co-secure domination number of proper interval graphs can be computed in linear time. Later in \cite{Kusum_Complexity}, Kusum and Pandey proved that the co-secure domination number of cographs can be determined in linear time. They also showed that the co-secure dominating set decision problem remains NP-complete for split graphs, undirected path graphs, and circle graphs. In addition, they demonstrated the construction of graphs having given order and co-secure domination number. Also, they proved that the problem is APX-hard for bounded-degree graphs and APX-complete for bounded-degree perfect graphs. In \cite{PANDA} Panda et al. strengthen the inapproximability result of the \textsc{MCSDS} for general graphs by showing that this problem cannot be approximated within a factor of $(1 - \epsilon) \ln \vert V \vert$ for perfect elimination bipartite graphs and star convex bipartite graphs unless $P=NP$. They also showed that the problem can be approximated within an approximation ratio of $O(\ln \vert V \vert)$ for any graph $G$ with $\delta(G) \geq 2$ and for $3$-regular and $4$-regular graphs they showed that the problem is approximable within a factor of $\frac{8}{3}$ and $\frac{10}{3}$, respectively. Furthermore, they proved that the CSDSD is APX-complete for $3$-regular graphs. In \cite{CSDD_split_chordalbipartite_convex} Kusum and Pandey showed that \textsc{CSDSD} remains NP-complete for doubly chordal graphs and for some subclasses of bipartite graphs, namely, chordal bipartite graphs, star-convex bipartite graphs, and comb-convex bipartite graphs. In addition, they give an efficient algorithm to compute the co-secure domination number of chain graphs and also show that the problem is linear-time solvable for bounded tree-width graphs and bounded clique-width graphs.

Recently in \cite{Manjusha3}, Manjusha et al. examined co-secure domination within jump graphs, $J(G)$ across various graph classes, $G$ and provided exact values of $\gamma{cs}(J(G))$ for certain standard graphs and characterize jump graphs with $\gamma_{cs}(J(G)) = 2$. Additionally, they established tight bound on $\gamma_{cs}(J(G))$, particularly for jump graphs derived from other graphs and those with pendant vertices under specified conditions. In \cite{intersectiongraph}, Cai-Xia Wang et al. showed that \textsc{CSDSD} is NP-complete in grid graphs, supergrid graphs, unit disk and unit square graphs. They also showed that \textsc{CSDSD} is APX-hard in $d$-box graphs for any fixed integer $d \geq 2$. In addition, they gave an $O(n+m)$ time $2(t - 1)$-approximation algorithm for the \textsc{MCSDS} in several geometric intersection graphs that are $K_{1,t}$- free for some integer $t \geq 3$.

We shall formally define the dominating set and co-secure dominating set as follows.
Given a graph $G = (V, E)$, a set $D \subseteq V$ is called a \textit{dominating set} if for every $v \in V$, either $v \in D$ or there exists a vertex $u \in D$ such that $uv \in E$. That is, $D$ is a dominating set if for every $v \in V$, $v \in D$ or $\exists u \in D$, such that $uv \in E$.

A set $D \subseteq V$ of a graph $G = (V, E)$ is called a \textit{CSDS} if $D$ is a dominating set of $G$ and for every $u \in D$,  there exists $v \in V \setminus D$ such that $uv \in E(G)$ and $( D \setminus \{u\}) \cup \{v\}$ is a dominating set of $G$. We use CSDS to denote the co-secure dominating set as well as the problem interchangeably.

In this paper, we consider 2-trees that have some specific properties and provide the bound for CSDS. The main contributions of the paper are summarized below.
\begin{itemize}[noitemsep, topsep=0pt]
    \item[$\bullet$] We give a bound for CSDS in $P_n^2$ graphs.
    \item[$\bullet$] We give a bound for CSDS in $P_n^{(2,k)}$ graphs.
    \item[$\bullet$]  We give a linear-time algorithm to solve MCSDS in $K_{1,3}$-free split graphs and prove that CSDSD is NP-complete in  $K_{1,r}$-free split graphs, $r\ge 4$.
\end{itemize}

\noindent \textbf{Road Map:} We next present graph preliminaries.  In Section \ref{section:2tree}, we present various bounds of CSDS on some subclasses of 2-trees.  A dichotomy result on the complexity of CSDSD is produced in Section \ref{section:four} and we conclude the article in Section \ref{section:five}.

\section{Preliminaries}
\hspace{0.75cm}Let $G = (V,E)$ be a graph with the vertex set, $V(G) = \{v_1, v_2, v_3, \ldots , v_n\}$ and the edge set, $E(G) = \{uv \mid u,v \in V(G)$ and $u$ is adjacent to $v$ in $G$ and $u \neq v$\}.  Once the context of graph $G$ is clear, we use $V$ and $V(G)$, interchangeably.  Similarly, $E$ and $E(G)$ are used to denote the same set. For a vertex $v \in V (G)$, the sets $N_{G}(v) = \{u \in V (G) \mid uv \in E(G)\}$ and $N_{G}[v] = N_{G}(v)\cup \{v\}$ denote the open neighbourhood and the closed neighbourhood of $v$ in $G$, respectively. For a set $S \subseteq V$, sets $N_{G}(S) = \bigcup_{u \in S} N_{G}(u)$ and $N_{G}[S] = N_{G}(S) \cup S$ denote the open neighbourhood and the closed neighbourhood of $S$. For a graph $G$ and $S \subseteq V(G)$, $G[S]$ represents the subgraph of $G$ induced on the vertex set $S$. Unless otherwise stated, we follow the graph-theoretic terminology and notation of West\cite{west2001introduction}.
  
A \textit{chordal graph}, also known as a triangulated graph, is a graph in which every cycle of length strictly greater than three in $G$ has at least one chord. A chord is an edge that connects two non-adjacent vertices in the cycle. Formally, a graph \( G = (V, E) \) is chordal if every cycle of length four or more has an edge between two non-consecutive vertices in the cycle. A vertex $v$ on a graph $G = (V, E)$ is called a \textit{simplicial vertex} if $N(v)$, induces a complete subgraph of $G$. That is, for every pair of vertices $u, w \in N(v)$, there exists an edge $uw \in E(G)$. A graph is chordal if and only if it has a \textit{perfect elimination ordering(PEO)}: $[v_1,v_2,\ldots,v_n]$ where $v_i$ is simplicial in the induced subgraph $G_i = G[v_i,\ldots,v_n]$.

A \textit{2-tree} is a graph that can be constructed recursively as follows:
\begin{enumerate}[noitemsep, topsep=0pt]
\addtolength{\itemindent}{1cm}
    \item A single edge is a 2-tree.
    \item If $G$ is a 2-tree, then a new vertex $v$ can be added by connecting $v$ to two adjacent vertices in $G$, forming a triangle.
\end{enumerate}
Formally, a graph $G = (V, E)$ is a 2-tree if it can be built by starting with a $K_2$ (a single edge) and iteratively adding vertices, each connected to a pair of adjacent vertices already in the graph.
 
A \textit{path graph}, denoted as $P_n$, is a graph that consists of a sequence of $n$ vertices where each edge $v_iv_{i+1} \in E(P_n)$ for $1 \leq i < n$, and there are no other edges. The graph class $P_n$, can be formally defined as $V(P_n) = \{ v_1, v_2, \dots, v_n \}$ and $E(P_n) = \{v_1v_2, v_2v_3, \dots, v_{n-1}v_n \}$. A \textit{square of a path graph}, denoted as $P_n^2$, is the square of a path graph $P_n$, where each vertex in the original path graph $P_n$ is connected to all vertices that are at most two edges away in the original graph. The graph class $P_n^2$ can be formally defined as $V(P_n^2) = \{ v_1, v_2, \dots, v_n \}$ and $E(P_n^2) = \{ v_iv_j \mid |i - j| \leq 2, 1 \leq i, j \leq n \}$. 

A \textit{Square of a path graph with breaks}, denoted $P_n^{(2,k)}$, is obtained from $P_n^2$ by replacing each edge $v_iv_{i+2} \in E(P_n^2)$ for $1 \leq i \leq n-2$ with a path having $k$ intermediate vertices called breaks. Furthermore, the vertex $v_{i+1}$ is connected to all intermediate vertices on the path between $v_i$ and $v_{i+2}$. Formally, for each edge $v_iv_{i+2} \in E(P_n^2)$, $1 \leq i \leq n-2$, introduce new vertices $u_{i,1}, u_{i,2}, \dots, u_{i,k}$, and replace the edge with the path: $(v_i, u_{i,1}, u_{i,2}, \dots, u_{i,k}, v_{i+2})$. Additionally, for each $j = 1, 2, \dots, k$, add the edges $v_{i+1}u_{i,j}$ to $P_n^{(2,k)}$.
Let $B = \{u_{i,j} \mid 1 \leq i \leq n-2, 1 \leq j \leq k\}$ where $u_{i,j}$ is the \(j^{th}\) intermediate vertex on the path between $v_i$ and $v_{i+2}$.  Now, the vertex set, $V(P_n^{(2,k)}) = V(P_n^2) \cup B$. Here $V(P_n^2)$ is termed as \textit{path vertices} and $B$ is termed as \textit{break vertices}. The edge set $E(P_n^{(2,k)}) = E(P_n^2) \setminus \{v_iv_{i+2}) \mid 1 \leq i \leq n-2\} \cup \{v_iu_{i,1}, u_{i,j}u_{i,j+1}, \dots, u_{i,k}v_{i+2} \mid 1 \leq i \leq n-2, 1 \leq j < k \}
\cup \{(v_{i+1}, u_{i,j}) \mid 1 \leq i \leq n-2, 1 \leq j \leq k \}$. An example of $P_4^2$ with $k$ breaks is given in Fig.\ref{Fig:Square_of_pathwithbreak}.  
\begin{figure}[htbp] 
    \centering
    \includegraphics[scale=0.75]{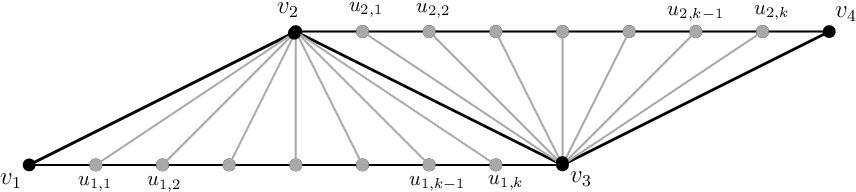}
     \caption{Square of path graph with $k$ breaks, $P_4^{(2,k)}$}
    \label{Fig:Square_of_pathwithbreak}
 \end{figure}

 A graph $G = (V, E)$ is called a \textit{split graph} if $V(G)$ admits a partition $(C \cup I)$, where $C$ induces a clique and $I$ induces an independent set. A clique $C$ is maximal if there does not exist a clique $C^{'}$ such that $C \subseteq V(C^{'})$. In split graphs, we consider $C$ to be a maximal clique, unless otherwise specified. For a vertex $u \in C,N_{G}^{I} (u) = N_{G}(u) \cap I$ and $d_{G}^{I}(u) = |N_{G}^{I} (u)|$. For $S \subseteq C$, $N_{G}^{I} (S) = \bigcup_{v \in S} N_{G}^{I}(v)$, and $d_{G}^{I}(S) = |N_{G}^{I} (S)|$. For a split graph $G$, $\Delta_{G}^{I} =$ maximum\{$d_{G}^{I}(v) \mid v \in C\}$. $K_{1,r} , r > 1$ is a split graph on $r + 1$ vertices such that $|C| = 1$ and $|I| = r$, $E(K_{1,r}) = \{\{x, v\} \mid x \in C, v \in I\}$. The center vertex of $K_{1,r}$ is the vertex of degree $r$. $K_{1,3}$ is termed the \textit{claw}. A graph $G$ is $K_{1,r}$-free if $G$ forbids $K_{1,r}$ as an induced subgraph. Hence, the $K_{1,3}$-free or claw-free  split graph is a split graph which forbids $K_{1,3}$ as an induced subgraph. Some of the structural results regarding the $K_{1,3}$-free split graphs are given below.
\begin{lemma} \label{deltaleq2} \cite{RENJITH2020246} 
    Let $G$ be a connected split graph. $G$ is claw-free if and only if one of the following conditions is met.
    \begin{itemize}
       \item[1.] $\Delta_{G}^{I} \leq 1$. 
       \item[2.] $\Delta_{G}^{I} = 2$ and for every $u,v \in C$ such that $d_{G}^{I}(u) = 2$, $N_{G}^{I}(u) \cap N_{G}^{I}(v) \neq \emptyset$.
    \end{itemize} 
\end{lemma}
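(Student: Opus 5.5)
We prove the two directions separately. Throughout, $C$ is the maximal clique and $I$ the independent set of the partition. Because $I$ is independent and $C$ is a clique, any induced claw $K_{1,3}$ in $G$ has a very restricted shape. Its three leaves are pairwise non-adjacent, so at most one leaf lies in $C$. If the center lay in $I$, all its neighbours would be in $C$ and hence pairwise adjacent, which is impossible. So the center $u$ lies in $C$. There are then two kinds of claw:
\begin{itemize}
\item[(a)] the center $u$ has three leaves in $N_G^I(u)$;
\item[(b)] the center $u$ has two leaves $a,b \in N_G^I(u)$ and one leaf $v \in C$ with $a,b \notin N_G(v)$.
\end{itemize}

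\textbf{Sufficiency.} Assume condition 1 or condition 2 holds, and suppose for contradiction that $G$ contains a claw. A claw of type (a) needs $d_G^I(u)\ge 3$, which contradicts $\Delta_G^I\le 2$. A claw of type (b) needs $d_G^I(u)\ge 2$, so condition 1 fails and condition 2 must hold. Then $d_G^I(u)=2$, so $N_G^I(u)=\{a,b\}$. Condition 2 applied to the pair $u,v$ gives $N_G^I(u)\cap N_G^I(v)\neq\emptyset$. Hence $v$ is adjacent to $a$ or to $b$, which contradicts the claw being induced.

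\textbf{Necessity.} Assume $G$ is claw-free. If some $u\in C$ had $d_G^I(u)\ge 3$, then $u$ together with three of its $I$-neighbours would induce a claw of type (a). Hence $\Delta_G^I\le 2$. If $\Delta_G^I\le 1$, condition 1 holds and we are done. Otherwise $\Delta_G^I=2$. Take $u$ with $N_G^I(u)=\{a,b\}$ and any $v\in C$ with $v\neq u$. If $N_G^I(u)\cap N_G^I(v)=\emptyset$, then $\{u,a,b,v\}$ induces a claw of type (b), a contradiction. So condition 2 holds.

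The case $v=u$ in condition 2 needs care. Read literally, the condition then says $N_G^I(u)\cap N_G^I(u)\neq\emptyset$, which holds trivially because $d_G^I(u)=2$. So the statement should be read with $u\neq v$, or the diagonal case is vacuous either way.

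The only real content of the proof is the classification of induced claws into types (a) and (b). That classification rests entirely on $I$ being independent and $C$ being a clique. Maximality of $C$ and connectivity of $G$ are not needed.
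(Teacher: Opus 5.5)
Your proof is correct and complete. The paper gives no proof of this lemma; it quotes it from Renjith--Sadagopan. Your direct classification of induced claws (the center must lie in $C$, with either three $I$-leaves or two $I$-leaves plus a $C$-leaf adjacent to neither) is the natural self-contained argument, and your remarks are also accurate: the diagonal case $u=v$ is vacuous, and neither maximality of $C$ nor connectivity is used.
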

\begin{lemma} \label{deltaeq3} \cite{RENJITH2020246}
 For a claw-free split graph $G$, if $\Delta_{G}^{I} = 2$, then $|I| \leq 3$.  
\end{lemma}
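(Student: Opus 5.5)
The plan is to argue by contradiction. Let $G$ be a connected claw-free split graph with $\Delta_{G}^{I}=2$, and suppose $|I|\ge 4$. We use two tools: Lemma~\ref{deltaleq2}(2), and the standing convention that $C$ is a \emph{maximal} clique. The maximality assumption is essential. Without it, $C=\{u,v_c,v_d\}$ with $N^I(u)=\{a,b\}$, $N^I(v_c)=\{c,a\}$ and $N^I(v_d)=\{d,a\}$ satisfies the intersection condition of Lemma~\ref{deltaleq2} and is claw-free, yet $|I|=4$. So the proof has to use the fact that no vertex of $I$ is adjacent to all of $C$.

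\textbf{Step 1.} Fix $u\in C$ with $N_G^I(u)=\{a,b\}$. Since $G$ is connected and $I$ is independent, every $c\in I\setminus\{a,b\}$ has some neighbour $v_c\in C$. By Lemma~\ref{deltaleq2}(2) applied to $u$ and $v_c$, the vertex $v_c$ also meets $\{a,b\}$. Because $\Delta_G^I=2$, this forces $N_G^I(v_c)=\{c,x_c\}$ with $x_c\in\{a,b\}$.

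\textbf{Step 2.} Take two distinct vertices $c,d\in I\setminus\{a,b\}$. Both $v_c$ and $v_d$ have $I$-degree $2$, so Lemma~\ref{deltaleq2}(2) gives $\{c,x_c\}\cap\{d,x_d\}\neq\emptyset$, which forces $x_c=x_d$. Without loss of generality, $x_c=x_d=a$. So $a$ is adjacent to $u$, $v_c$ and $v_d$.

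\textbf{Step 3.} This is the step we expect to be the main obstacle. By maximality of $C$, the set $C\cup\{a\}$ is not a clique, so there is some $z\in C$ with $za\notin E$. In particular $z\notin\{u,v_c,v_d\}$. Now check three potential claws, each of which must be broken by an edge at $z$:
\begin{itemize}
\item Centre $u$ with leaves $a,b,z$. Here $z$ is adjacent to $u$ but not to $a$, so claw-freeness forces $zb\in E$.
\item Centre $v_c$ with leaves $a,c,z$. This forces $zc\in E$.
\item Centre $v_d$ with leaves $a,d,z$. This forces $zd\in E$.
\end{itemize}
Hence $d_G^I(z)\ge 3$, contradicting $\Delta_G^I=2$. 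Therefore $|I|\le 3$.

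The only delicate point is making sure the three claws are genuinely induced. In each case the two $I$-leaves are nonadjacent because $I$ is independent, and $z$ is nonadjacent to $a$ by choice. So in each case the single missing edge from $z$ to the third leaf is exactly what claw-freeness forces.
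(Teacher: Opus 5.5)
Your proof is correct. The paper gives no proof of this lemma; it simply imports it from \cite{RENJITH2020246}, so there is no in-paper argument to compare against. What you have is a self-contained derivation from Lemma~\ref{deltaleq2}(2), connectivity, and the standing convention that $C$ is a maximal clique.

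The points you leave implicit all hold:
\begin{itemize}
\item $v_c\neq u$, because $c\notin N_G^I(u)$;
\item $v_c\neq v_d$, because otherwise that vertex would have the three $I$-neighbours $c,d,x_c$;
\item $x_c=x_d$ is forced, since $c,d\notin\{a,b\}$.
\end{itemize}
The three claws in Step~3 are just Lemma~\ref{deltaleq2}(2) applied to the pairs $(u,z)$, $(v_c,z)$ and $(v_d,z)$. So the whole argument rests on that lemma plus maximality of $C$.

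Your example showing that maximality cannot be dropped is valid: there $C\cup\{a\}$ is a clique, and the maximal partition has $\Delta_G^I=1$. Connectivity is equally indispensable. Adding isolated vertices to $I$ preserves claw-freeness, maximality of $C$ and $\Delta_G^I=2$, while making $|I|$ unbounded. So you are right to invoke connectivity explicitly in Step~1, even though the statement as printed leaves it implicit.
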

From Lemma \ref{deltaleq2} and Lemma \ref{deltaeq3}, for a claw-free split graph $G$, if $|I| \geq 4$ then $\Delta_{G}^{I} \leq 1$.

\section{CSDS in 2-trees} \label{section:2tree}
\hspace{0.75cm}In this section, we consider $2$-trees that have some specific properties and provide the bound for CSDS.  First, we give a bound for CSDS in $P_n^2$ graphs. In addition, we consider $P_n^2$ with breaks and also find bounds for CSDS in such graphs.
\subsection{CSDS in $P_n^2$}
\hspace{0.75cm}Recall that $P_n^2$ is a 2-tree of $n$ vertices with 2 simplicial vertices such that the maximum degree, $\Delta(P_n^2)\le 4$. An example, $P_9^2$ is shown in Fig. \ref{CSDS_Square_of_path}, where the red nodes represent the dominating vertices in CSDS of $P_9^2$, and the blue nodes indicate the co-vertices associated with the adjacent dominating vertices.   The structural constraints of $P_n^2$ allow us to obtain the following bound on $\gamma_{cs}(P_n^2)$.
\begin{figure} 
    \centering
    \includegraphics[scale=0.95]{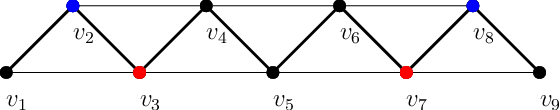}
     \caption{CSDS in $P_9^2$}
    \label{CSDS_Square_of_path}
 \end{figure}
\begin{lemma}\label{pathsquare}
 $\gamma_{cs}(P_n^2) = 2 \left\lfloor \frac{n}{9} \right\rfloor + \varepsilon, n \ge 3$ where
$\varepsilon =
  \begin{cases}
    0, & \text{if } n \bmod 9 = 0; \\
    1, & \text{if } 0 < n \bmod 9 < 5; \\
    2, & \text{if } n \bmod 9 \geq 5.
  \end{cases}$
\end{lemma}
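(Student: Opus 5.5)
The plan is to prove matching upper and lower bounds. Throughout, write $n = 9q + r$ with $0 \le r \le 8$. A CSDS $D=\{v_{d_1},\dots,v_{d_t}\}$ is listed with $d_1<\dots<d_t$, and we set $g_i = d_{i+1}-d_i$. Since $N[v_j]=\{v_{j-2},\dots,v_{j+2}\}$, domination of $P_n^2$ is essentially an interval-covering problem. This is what makes both directions tractable.

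\textbf{Upper bound (construction).} Partition $v_1,\dots,v_{9q}$ into blocks of nine consecutive vertices. In the block $v_{9s+1},\dots,v_{9s+9}$, put $v_{9s+3}$ and $v_{9s+7}$ into $D$. These two vertices dominate the whole block, overlapping only at $v_{9s+5}$.
\begin{itemize}
\item The co-vertex of $v_{9s+3}$ is $v_{9s+2}$. It still covers $v_{9s+1},\dots,v_{9s+4}$, and $v_{9s+5}$ remains covered by $v_{9s+7}$.
\item Symmetrically, the co-vertex of $v_{9s+7}$ is $v_{9s+8}$.
\end{itemize}
In both cases the swap keeps every vertex inside the block dominated. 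Vertices outside the block were never privately dominated by the swapped vertex, so they stay dominated too. This gives $2q$ vertices when $r=0$.

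For the remainder we add vertices at the right end.
\begin{itemize}
\item If $1\le r\le 4$, add a single vertex near $v_{9q+r}$, for example $v_{\max(9q+1,\,n-2)}$ suitably placed. Its co-vertex keeps the tail dominated, using the fact that the last block's $v_{9q-2}$ reaches $v_{9q}$ and that $r\le 4$ leaves slack.
\item If $5\le r\le 8$, use two vertices, imitating the block pattern on the shortened tail.
\end{itemize}
The small cases $3\le n\le 8$ ($q=0$) are checked by hand. Here one vertex suffices for $n\le 4$, for instance $v_2$ with co-vertex $v_3$. For $5\le n\le 8$ a single vertex cannot work: a dominating vertex then has $N[v]=V$, forcing $n=5$ and $v=v_3$, and no co-vertex of $v_3$ dominates both $v_1$ and $v_5$.

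\textbf{Lower bound.} The key local lemma concerns private neighbours. If $u=v_{d_i}\in D$ has co-vertex $v$, then every private neighbour of $u$ with respect to $D$ (including $u$ itself, if it is undominated by $D\setminus\{u\}$) must lie in $N[v]$, since $v\notin D$. Because $N[v]$ is a window of five consecutive indices containing $v$ but not $u$, the private neighbours of $u$ cannot extend two steps on both sides of $d_i$.

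From this we derive constraints on the gaps $g_i$:
\begin{itemize}
\item $g_i\le 5$ always, by domination.
\item We cannot have $g_{i-1}=g_i=5$. That would make both $v_{d_i-2}$ and $v_{d_i+2}$ private neighbours of $v_{d_i}$, and no single neighbour of $v_{d_i}$ covers both.
\item Hence $g_{i-1}+g_i\le 9$.
\item At the ends, $d_1\le 3$, and $d_1=3$ forces $g_1\le 4$; symmetrically at the right end.
\end{itemize}
Then
\[
n\le d_1+\textstyle\sum_i g_i+(n-d_t)\le 3+(n-d_t)+\textstyle\sum_i g_i ,
\]
and we bound the gap sum by pairing consecutive gaps, each pair contributing at most $9$. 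Comparing with $n=9q+r$ yields $t\ge 2q+\varepsilon$. In particular, the refined end conditions should show that $5\le r$ forces two extra vertices rather than one.

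\textbf{Main obstacle.} I expect the exact bookkeeping at the two ends and for odd $t$ to be the hardest step, i.e.\ getting the additive constant $\varepsilon$ exactly right rather than a bound off by one. The plan is first to establish the clean inequality $n\le \tfrac{9}{2}t+c$ for an explicit constant $c$ via the pairing argument. Then I would refine it by a short case analysis on $(d_1,g_1)$ and $(n-d_t,g_{t-1})$, using the private-neighbour lemma again for the end vertices $v_{d_1}$ and $v_{d_t}$. Finally I would cross-check against the small values $n\le 17$ by direct enumeration.
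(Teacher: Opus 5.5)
Your plan is sound, and for the lower bound it takes a genuinely different route from the paper. The paper fixes an explicit set: indices $\equiv 3,7 \pmod 9$, plus $v_n$ when $n \bmod 9\in\{1,2,5,6\}$. It then proves minimality by induction on $n$. The induction uses the local fact that every degree-$4$ vertex $v_i\in D$ has some $v_j\in D$ with $N(v_i)\cap N(v_j)\neq\emptyset$, i.e.\ $|i-j|\le 4$, which is exactly your ``no two consecutive gaps equal to $5$''. It then argues residue by residue that the particular set from the induction hypothesis cannot absorb $v_n$ without one more vertex. You instead aggregate the same local constraint globally over an arbitrary CSDS. This means you never need the assumption that an optimal CSDS of $P_n^2$ arises from one of $P_{n-1}^2$, a step the paper's induction effectively takes for granted. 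Your approach also shows directly why the period is $9=4+5$. In exchange, the paper writes down the co-vertices for every residue, which your tail cases ($5\le r\le 8$, and $q=0$) still leave informal.

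The off-by-one you anticipate goes away with one normalisation, and you already have the facts it needs. Set $g_0=d_1+2$ and $g_t=n-d_t+3$. Then $n=\sum_{i=0}^{t}g_i-5$ with every $g_i\le 5$. Your end conditions are $d_1=3\Rightarrow g_1\le 4$, $n-d_t=2\Rightarrow g_{t-1}\le 4$, and, for $t=1$, the impossibility of $d_1=3$ with $n=5$. Together with the interior rule, they say precisely that no two consecutive terms of $g_0,\dots,g_t$ equal $5$. Hence $\sum_{i=0}^t g_i\le 4(t+1)+\lceil (t+1)/2\rceil$. This gives $n\le 9m$ for $t=2m$ and $n\le 9m+4$ for $t=2m+1$, which is exactly $t\ge 2q+\varepsilon$.

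Two slips to correct when writing up:
\begin{enumerate}
\item $N[v]$ does contain $u$, since they are adjacent. What you need is that $N[v]=\{v_{c-2},\dots,v_{c+2}\}$ with $c\neq d_i$ cannot contain both $v_{d_i-2}$ and $v_{d_i+2}$.
\item Your displayed chain is an identity as written. Its actual content is $d_1\le 3$ together with $n-d_t\le 2$.
\end{enumerate}
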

\begin{proof}
Let $\sigma = (v_1,v_2,\ldots,v_n)$ be the PEO of $P_n^2$ where $d(v_1)=2$ and $d(v_n)=2$.\\
Let $D= \begin{cases}     \bigcup\limits_{\forall x,y}\{v_{x}, v_{y}~\mid~x ~\text{mod}~9 = 3, y ~\text{mod}~9 = 7, 1< x,y\le n\}\cup \{v_n\}, & \\ \hspace*{50pt} \text{ if }n\bmod 9 \in \{1,2,5,6\}; & \\      \bigcup\limits_{\forall x,y}\{v_{x}, v_{y}~\mid~x ~\text{mod}~9 = 3, y ~\text{mod}~9 = 7, 1< x,y\le n\}, & \\     \hspace*{50pt} \text{ otherwise. }& \\   \end{cases}$

 For every $v_x\in D, x~\text{mod}~9 = 3$, $v_{x-1}$ is the co-vertex of $v_x$.  For every $v_y\in D, y~\text{mod}~9 = 7$, $v_{y+1}$ is the co-vertex of $v_y$ except for the case where $y = n$.  If $v_n\in D$, then $v_{n-1}$ will be the co-vertex of $v_n$.  Therefore, $D$ is a CSDS of $P_n^2$. We use induction on $n$ to show that $D$ is a minimum CSDS of $P_n^2$.  \textit{Base case:} $\gamma_{cs}(P_n^2) =1, n \in \{3,4\}, \gamma_{cs}(P_n^2) =2, n \in \{5,6,7,8,9\}$. It is easy to observe that $D$ is the minimum CSDS for $P_n^2, 3 \le n \le 9$. \textit{Induction hypothesis:} Assume that $\gamma_{cs}(P_{n-1}^2) = 2 \left\lfloor \frac{n-1}{9} \right\rfloor + \varepsilon, n-1\ge 8$ where
$\varepsilon =
  \begin{cases}
    0, & \text{if } n \bmod 9 = 0; \\
    1, & \text{if } 0 < n \bmod 9 < 5; \\
    2, & \text{if } n \bmod 9 \geq 5.
  \end{cases}$ \\and $D$ is the minimum CSDS for all square of the path graphs of order less than $n$, where $n-1 \geq 8$. \textit{Induction step:} To show that $D$ is a minimum CSDS of $P_n^2$, we first claim that for any vertex $v_i \in D$ where $d(v_i)=4$, there should exist a vertex $v_j\in D$ such that $N(v_i) \cap N(v_j) \neq \emptyset$. Let $D'$ be an arbitrary minimum CSDS such that $v_i \in D', d(v_i)=4$.  Suppose that there does not exist $v_j \in D', i \neq j$, $N(v_i) \cap N(v_j) \neq \emptyset$.  Let $v_k$ be the co-vertex of $v_i$ and clearly $v_k\in N(v_i)$.  If $k>i$, then note that $D'\setminus \{v_i\}\cup \{v_k\}$ will not dominate $v_{i-2}$. Similarly, $k < i$, then note that $D'\setminus \{v_i\}\cup \{v_k\}$ will not dominate $v_{i+2}$, a contradiction to the definition of $D'$. Therefore, for any vertex $v_i\in D$ where $d(v_i) = 4$, there should exist a vertex $v_j\in D$ such that $N(v_i)\cap N(v_j)\neq \emptyset$. Let the vertices of $D$ be ordered according to $\sigma$.
Consider the vertex $v_n$ of $P_n^2$ where $d(v_n) = 2$. There are four cases.\\
\textbf{Case 1}: If $n \bmod 9 = 1$, then it is clear that $n-1 \bmod 9 = 0$. By the induction hypothesis, $P_{n-1}^2$ has a minimum CSDS, $D$,  $v_{n-3} \in D$, and $v_{n-2}$ is the co-vertex of $v_{n-3}$.  Clearly, $v_n$ is not dominated by $v_{n-3}$.  Note that $v_3 \in D$ is the largest indexed vertex that dominates the first vertex $v_1$ of $P_{n-1}^2$. Furthermore, for any vertex $v_i \in D$ where $d(v_i) = 4$, there exists a vertex $v_j\in D$ such that $|N(v_i) \cap N(v_j)| =1$.  Since the above property ($N(v_i)\cap N(v_j)\neq \emptyset$) is minimally satisfied for all vertices of $D$, it is inevitable to include an additional vertex to dominate $v_n$ in $P_{n}^2$.  Therefore, the vertex $v_n$ is included in the minimum CSDS of $P_n^2$ and $v_{n-1}$ should be the co-vertex of $v_n$.  Therefore, $D \cup \{v_n\}$ 
is the minimum CSDS of $P_n^2$ with $\gamma_{cs}(P_n^2) = \gamma_{cs}(P_{n-1}^2) + 1$.\\
\textbf{Case 2}: If $n \bmod 9 \in \{2,3,4\}$, then by the induction hypothesis, $P_{n-1}^2$ has a minimum CSDS, D, $v_{n-1} \in D$ and $v_{n-2}$ is the co-vertex of $v_{n-1}$. \textbf{Case 2.1:} If $n \bmod 9 \in \{2,3\}$, then we observe $(D \setminus \{v_{n-1}\}) \cup \{v_n\}$ where $v_{n-1}$ is the co-vertex of $v_n$ is the minimum CSDS of $P_n^2$. 
\textbf{Case 2.2:} If $n \bmod 9 = 4$, then we observe that $D$ itself is the CSDS of $P_n^2$. Therefore, if $n \bmod 9 \in \{2,3,4\}$, $\gamma_{cs}(P_n^2) = \gamma_{cs}(P_{n-1}^2)$, that is, $D$ is the minimum CSDS of $P_n^2$.
\\\textbf{Case 3}: If $n \bmod 9 = 5$, then it is clear that $n-1 \bmod 9 = 4$. By the induction hypothesis, $P_{n-1}^2$ has a minimum CSDS, $D$, $v_{n-2} \in D$, and $v_{n-3}$ is the co-vertex of $v_{n-2}$. Clearly, $v_n$ is not dominated by the co-vertex $v_{n-3}$. Note that $ v_3 \in D$ is the largest indexed vertex that dominates the first vertex $v_1$ of $P_{n-1}^2$. Furthermore, for any vertex $v_i \in D$, where $d(v_i)=4,$ there exists a vertex $v_j \in D$ such that $|N(v_i) \cap N(v_j)| = 1$.  Since the above property ($N(v_i)\cap N(v_j)\neq \emptyset$) is minimally satisfied for all vertices of $D$, it is inevitable to include an additional vertex to dominate $v_n$ in $P_{n}^2$.  Therefore, the vertex $v_n$ is included in the minimum CSDS of $P_n^2$ and $v_{n-1}$ should be the co-vertex of $v_n$. Therefore, $D \cup \{v_n\}$ is the minimum CSDS of $P_n^2$ with $\gamma_{cs}(P_n^2) = \gamma_{CS}(P_{n-1}^2) + 1$.
\\\textbf{Case 4}: If $n \bmod 9 \in \{6,7,8,0\}$, then there are two cases. \textbf{Case 4.1:} If $n \bmod 9 \in \{6,7,8\}$, then by the induction hypothesis, $P_{n-1}^2$ has a minimum CSDS, D, $v_{n-1} \in D$ and $v_{n-2}$ is the co-vertex of $v_{n-1}$. \textbf{Case 4.1.1:} If $n \bmod 9 \in \{6,7\}$, then we obtain $(D \setminus \{v_{n-1}\}) \cup \{v_n\}$ where $v_{n-1}$ is the co-vertex of $v_n$ is the minimum CSDS of $P_n^2$. \textbf{Case 4.1.2:} If $n \bmod 9 = 8$, then we observe $D$ where $v_n$ is the co-vertex of $v_{n-1}$ is the minimum CSDS of $P_n^2$. \textbf{Case 4.2:} If $n \bmod 9 = 0$, then
by the induction hypothesis $v_{n-2} \in D$ and $v_{n-1}$ is the co-vertex of $v_{n-2}$. In $P_n^2$, we observe that $D$ is the minimum CSDS with $\gamma_{cs}(P_n^2) = \gamma_{cs}(P_{n-1}^2)$. Therefore, if $n \bmod 9 \in \{6,7,8,0\}$,
$\gamma_{cs}(P_n^2) = \gamma_{cs}(P_{n-1}^2)$, that is, $D$ is the minimum CSDS of $P_n^2$.\\
In all cases $\gamma_{cs}(P_n^2) = 2 \left\lfloor \frac{n}{9} \right\rfloor + \varepsilon, n\ge 3$ where
$\varepsilon =
  \begin{cases}
    0, & \text{if } n \bmod 9 = 0; \\
    1, & \text{if } 0 < n \bmod 9 < 5; \\
    2, & \text{if } n \bmod 9 \geq 5.
  \end{cases}$ and $D$ is the minimum CSDS of $P_n^2$.
  This completes the case analysis and a proof of the Lemma \ref{pathsquare}.
\end{proof}

\subsection{CSDS in $P_{n}^{(2,1)}$}
\hspace{0.75cm}We next consider the square of a path graph with one break, $P_n^{(2,1)}$ and give a bound for CSDS in such graphs. In Fig. \ref{CSDS_in_Squareofpath_with_onebreak}, the red nodes represent the dominating vertices in the CSDS of $P_9^{(2,1)}$, while the blue nodes correspond to co-vertices associated with the adjacent dominating vertices. We proceed to show a bound on $\gamma_{cs}(P_n^{(2,1)})$, where the vertex set of $P_n^{(2,1)}$ can be partitioned into path vertices, $\{v_1,v_2,\ldots,v_n\}$ and break vertices, $\{u_{1,1},u_{2,1},\ldots,u_{n-2,1}\}$.  
\begin{figure}[H] 
    \centering
    \includegraphics[scale=0.75]{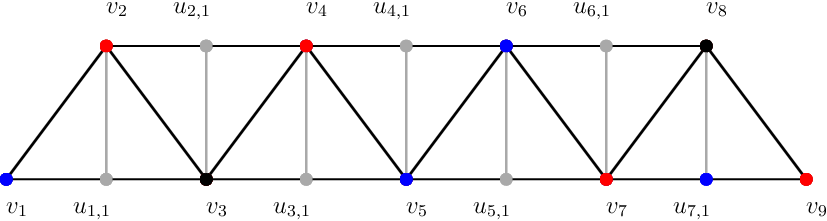}
     \caption{CSDS in $P_9^{(2,1)}$}
    \label{CSDS_in_Squareofpath_with_onebreak}
 \end{figure}
\begin{lemma}\label{pathsquarewith1break}
$
\gamma_{cs}(P_n^{(2,1)}) = 2 \left\lfloor \frac{n}{5} \right\rfloor + \varepsilon$, where $n \geq 4$ and $\varepsilon =
  \begin{cases}
    0, & \text{if } n\bmod 5 = 0; \\
    1, & \text{if } 0 < n\bmod 5 < 3; \\
    2, & \text{if } n\bmod 5 \geq 3.
  \end{cases}
$
\end{lemma}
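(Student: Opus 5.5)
The plan is to prove the two inequalities separately: an explicit co-secure dominating set of the claimed size for the upper bound, and a windowing argument refined by induction on $n$, in the style of the proof of Lemma \ref{pathsquare}, for the lower bound. I first record the local structure of $P_n^{(2,1)}$. Writing $u_i=u_{i,1}$, the path vertices satisfy $N[v_j]=\{v_{j-1},v_j,v_{j+1},u_{j-2},u_{j-1},u_j\}$, with indices restricted to existing vertices. Each break vertex satisfies $N[u_i]=\{u_i,v_i,v_{i+1},v_{i+2}\}$. In particular, the breaks $u_1$ and $u_{n-2}$ can only be dominated from $\{u_1,v_1,v_2,v_3\}$ and $\{u_{n-2},v_{n-2},v_{n-1},v_n\}$ respectively, and the graph has $2n-2$ vertices.

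\textbf{Upper bound.} Take $D=\{v_x : x \bmod 5 \in \{2,4\},\ 1\le x\le n\}$, and add $v_n$ (or shift the last chosen vertex) when $n \bmod 5\in\{1,3\}$, so that $v_n$ and $u_{n-2}$ are dominated. Inside a block $v_{5t+1},\dots,v_{5t+5}$, the vertex $v_{5t+2}$ covers $v_{5t+1},v_{5t+2},v_{5t+3},u_{5t},u_{5t+1},u_{5t+2}$. The vertex $v_{5t+4}$ covers $v_{5t+3},\dots,v_{5t+5},u_{5t+2},u_{5t+3},u_{5t+4}$. So consecutive blocks tile $V$ with overlaps only at $v_{5t+3}$ and $u_{5t+2}$. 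For co-vertices I will use break vertices: for $v_{5t+2}$ take $u_{5t+1}$, and for $v_{5t+4}$ take $u_{5t+3}$. The check is that the vertices losing their dominator, namely $N[v_j]\setminus N[u]$, are still covered by the other element of the same block or of the neighbouring block. This holds precisely because of the overlaps at $v_{5t+3}$ and $u_{5t+2}$ and the neighbouring $v_{5t\pm\cdot}$; where it fails I will switch the co-vertex to a path neighbour. The boundary cases $n \bmod 5\in\{1,2,3,4\}$ are checked directly and give the extra $1$ or $2$ vertices, matching $\varepsilon$.

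\textbf{Lower bound.} Let $D'$ be a minimum CSDS. As in the proof of Lemma \ref{pathsquare}, I first prove a local claim: if $v_i\in D'$ is an interior path vertex with co-vertex $w$, then some other vertex of $D'$ lies within distance $2$ of $v_i$. The reason is that $N[v_i]\setminus N[w]$ is nonempty for every choice of $w$. For example, replacing $v_i$ by $u_i$ undominates $u_{i-2}$ and $u_{i-1}$, while replacing it by $v_{i+1}$ undominates $u_{i-2}$ and $v_{i-1}$; these vertices must then be covered by another element of $D'$. With this claim I partition the index line into consecutive windows of five path indices together with their breaks. I then show that any window, which contains $10$ vertices, needs at least two elements of $D'$ among the vertices that dominate it. The argument is a charging one: a single path vertex covers at most $6$ vertices, a break vertex at most $4$, and the co-secure swap forces a second dominator nearby. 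After that I run induction on $n$ with cases on $n\bmod 5$, exactly paralleling Cases 1--4 of Lemma \ref{pathsquare}: going from $P_{n-1}^{(2,1)}$ to $P_n^{(2,1)}$ adds $v_n$ and $u_{n-2}$, and I must show that the count increases exactly when $n\bmod 5\in\{1,3\}$.

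The main obstacle is the lower bound. A naive counting argument gives only about $(2n-2)/6$, far below $2n/5$, so the co-secure condition must be exploited quantitatively. I would try a discharging argument first. Assign each vertex $x\in V$ weight $1$ and send it to a dominator in $D'$. Then show, using the local claim, that every $v\in D'$ receives at most $5$ after accounting for the vertices that its swap partner must re-cover, so that $|D'|\ge (2n-2)/5$ up to the boundary correction. The ends, where $u_1$ and $u_{n-2}$ force a vertex among very few choices, would be handled separately to obtain the exact $\varepsilon$.
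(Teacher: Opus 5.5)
Your lower bound is not yet a proof. Of the three routes you sketch, two do not work as stated, and the third is missing the two facts that carry all the content.

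The window claim fails. Every block of $10$ vertices needs two dominators, but this already holds for ordinary dominating sets, and those can have size about $n/3$ (take $v_3,v_6,v_9,\dots$ plus $O(1)$ vertices at the ends). So it cannot give $2n/5$: consecutive windows share dominators, and per-window counts do not add up.

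The step from $P_{n-1}^{(2,1)}$ to $P_n^{(2,1)}$ is how the paper argues. It needs $\gamma_{cs}(P_n^{(2,1)})\ge\gamma_{cs}(P_{n-1}^{(2,1)})+1$ when $n\bmod 5\in\{1,3\}$, and no decrease otherwise. The inductive hypothesis only describes minimum CSDSs of the smaller graph. Nothing in your plan turns an arbitrary CSDS of $P_n^{(2,1)}$ into a suitably smaller CSDS of $P_{n-1}^{(2,1)}$; such a CSDS may use $v_n$ or $u_{n-2}$ as a member or as a co-vertex.

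Discharging is the right route, but your local claim (another element of $D'$ within distance $2$) yields only one shared vertex, i.e. a load of $5.5$ per dominator. What you need is the following: for $3\le i\le n-2$, $v_i\in D'$ and every co-vertex $w$, $|N[v_i]\setminus N[w]|\ge 2$. The sets are $\{v_{i-1},u_{i-2}\}$ for $w=v_{i+1}$, $\{v_{i+1},u_i\}$ for $w=v_{i-1}$, $\{u_{i-2},u_i\}$ for $w=u_{i-1}$, and three vertices for $w\in\{u_{i-2},u_i\}$. All of them must be dominated by $D'\setminus\{v_i\}$. Put $c(x)=|N[x]\cap D'|$ and $M=\{x : c(x)\ge 2\}$, and let each vertex split its unit weight equally among its $c(x)$ dominators. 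Then $d\in D'$ receives at most $|N[d]|-\frac12|N[d]\cap M|$. This is at most $5$ for every $d$: at most $4$ for break vertices and $3$ for $v_1,v_n$.

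This gives only $|D'|\ge(2n-2)/5$, which is one short exactly when $n\bmod 5\in\{1,3\}$. It gives $2$ for $n=6$ and $3$ for $n=8$, where the true values are $3$ and $4$. So the ends are where $\varepsilon$ is decided, and they cannot be left as ``handled separately''. The missing fact is that any element of $D'$ dominating $v_1$ (it lies in $\{v_1,v_2,u_1\}$) receives at most $4$, and likewise at $v_n$; these are distinct vertices since $n\ge 4$. This is immediate for $v_1$ and $u_1$. For $v_2$ you must show $|N[v_2]\cap M|\ge 2$ by cases on its co-vertex:
\begin{itemize}
\item co-vertices $v_1$ and $u_2$ already leave two vertices of $N[v_2]$ to be re-dominated;
\item for $u_1$, the vertex $u_2$ needs a second dominator among $u_2,v_3,v_4$, and each of these also dominates $v_2$ or $v_3$;
\item for $v_3$, the second dominator of $v_1$ is $v_1$ or $u_1$, and both also dominate $v_2$.
\end{itemize}
Together this gives $2n-2\le 5|D'|-2$, i.e. $|D'|\ge\lceil 2n/5\rceil$, which equals $2\lfloor n/5\rfloor+\varepsilon$ in all five residues.

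Your upper-bound co-vertices are also wrong. For $t\ge 1$, $u_{5t}$ is dominated only by $v_{5t+2}$, and $u_{5t}\notin N[u_{5t+1}]$, so $u_{5t+1}$ cannot replace $v_{5t+2}$. Likewise $u_{5t+4}$ prevents $u_{5t+3}$ from replacing $v_{5t+4}$, except in the last block. The choice that works is the outer path neighbours $v_{5t+1}$ and $v_{5t+5}$, as in the paper. Each loses exactly $\{v_{5t+3},u_{5t+2}\}$, the overlap you identified.
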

\begin{proof}
Let $v_1,v_2,\ldots,v_n$ be the path vertices and $u_{i,1}$ be the break vertex between $v_i$ and $v_{i+2}, 1 \leq i \leq n-2$ where $d(v_1) = 2$, $d(v_n) = 2$ and $d(u_{i,1}) = 3$ for every $i$, $1 \leq i \leq n-2$.

Let $D= \begin{cases}     \bigcup\limits_{\forall x,y}\{\{v_{x},v_{y}\}~\mid~x~\text{mod}~5=2, y~\text{mod}~5=4, 1 < x,y\le n\}\ \cup \{v_{n}\}, & \\      \hspace*{50pt}  \text{if } n \bmod 5 \in \{1,3\}; & \\     \bigcup\limits_{\forall x,y}\{\{v_{x}, v_{y}\}~\mid~x~\text{mod}~5=2, y~\text{mod}~5=4, 1 < x,y\le n\},  & \\     \hspace*{50pt} \text{otherwise}. & \\   \end{cases}$

We claim that $D$ is a minimum CSDS of $P_n^{(2,1)}$. For every $v_x \in D, x ~\text{mod}~5 = 2$, $v_{x-1}$ is the co-vertex of $v_x$.  For every $v_y \in D, y ~\text{mod}~5 = 4$, $v_{y+1}$ is the co-vertex of $v_y$ except for the case where $y = n$. If $y = n$, then $u_{y-2,1}$ is the co-vertex of $v_y$. Therefore, $D$ is a CSDS of $P_n^{(2,1)}$. We use induction on $n$ to show the minimality of $D$. \textit{Base case:} $\gamma_{cs}(P_n^{(2,1)}) = 2, n\in \{4,5\}$, $\gamma_{cs}(P_n^{(2,1)}) = 3, n\in \{6,7\}$ and $\gamma_{cs}(P_n^{(2,1)}) = 4, n\in \{8\}$. It is easy to observe that $D$ is the minimum CSDS for $P_n^{(2,1)}, n \le 8$. \textit{Induction hypothesis:} Assume that $
\gamma_{cs}(P_{n-1}^{(2,1)}) = 2 \left\lfloor \frac{n-1}{5} \right\rfloor + \varepsilon$, where $n-1 \geq 7$ and $\varepsilon =
  \begin{cases}
    0, & \text{if } n\bmod 5 = 0; \\
    1, & \text{if } 0 < n\bmod 5 < 3; \\
    2, & \text{if } n\bmod 5 \geq 3.
  \end{cases}
$ and $D$ is the minimum CSDS for all square of the path graphs with one break vertices having the number of path vertices less than $n$, where $n-1 \ge 7$. \textit{Induction step:} To show that $D$ is a minimum CSDS of $P_n^{(2,1)}$,  we first observe that for any vertex $v_i \in D$ where $d(v_i) = 5$, there should exist a vertex $v_j\in D$ such that $N(v_i) \cap N(v_j) \neq \emptyset$. Let $D'$ be an arbitrary minimum CSDS such that $v_i \in D', d(v_i)=5$. Suppose that there does not exist $v_j \in D', i \neq j$, $N(v_i) \cap N(v_j) \neq \emptyset$. Let $v_k$ be the co-vertex of $v_i$ and clearly $v_k \in N(v_i)$. If $k > i$, then note that $D' \setminus \{v_i\} \cup \{v_k\}$ will not dominate $v_{i-1}$ and $u_{i-2,1}$. Similarly, $k < i$, then note that $D' \setminus \{v_i\} \cup \{v_k\}$ will not dominate $v_{i+1}$ and $u_{i,1}$, a contradiction to the definition of $D'$. Therefore, for any vertex $v_i \in D$ where $d(v_i) = 5$, there should exist a vertex $v_j \in D$ such that $N(v_i) \cap N(v_j) \neq \emptyset$. Consider the vertex $v_n$ of $P_n^{(2,1)}$ where $d(v_n) = 2$. There are three cases.\\
\textbf{Case 1}: If $n \bmod 5 = 1$, then it is clear that $n-1 \bmod 5 =0$. By the induction hypothesis, $P_{n-1}^{(2,1)}$ has a minimum CSDS, D, $v_{n-2} \in D$ and $v_{n-1}$ is the co-vertex of $v_{n-2}$. Clearly, $v_n$ is not dominated by $v_{n-2}$. Note that $v_2 \in D$ is the largest indexed vertex that dominates the first vertex $v_1$ of $P_n^{(2,1)}$. Furthermore, for any vertex $v_i \in D$ where $d(v_i) = 5$, there exists a vertex $v_j \in D$ such that $|N(v_i) \cap N(v_j)| = 2$. Since property $N(v_i) \cap N(v_j) \neq \emptyset$ is minimally satisfied for all vertices of $D$, it is inevitable to include one of the vertex $v_n$ or $u_{n-2,1}$ to dominate $v_n$ in $P_n^{(2,1)}$. We consider $v_{n}$ in the minimum CSDS of $P_n^{(2,1)}$ and $u_{n-2,1}$ is the co-vertex of $v_{n}$. Therefore, $D \cup \{v_n\}$ is the minimum CSDS of $P_n^{(2,1)}$ with $\gamma_{cs}(P_n^{(2,1)}) = \gamma_{CS}(P_{n-1}^{(2,1)}) + 1$. \\
\textbf{Case 2}: If $n \bmod 5 \in \{2,4,0\}$, then by the induction hypothesis, $P_{n-1}^{(2,1)}$ has a minimum CSDS, D, $v_{n-1} \in D$ and $u_{n-3,1}$ is the co-vertex of $v_{n-1}$. \textbf{Case 2.1:} If $n \bmod 5 = 2$, then we observe that $(D \setminus \{v_{n-1}\}) \cup \{v_n\}$ where $v_{n-1}$ is the co-vertex of $v_n$ is the minimum CSDS of $P_{n}^{(2,1)}$. \textbf{Case 2.2:} If $n \bmod 5 = 4$, then we observe that $(D\setminus \{v_{n-1}\}) \cup \{v_n\}$ where $u_{n-2,1}$ is the co-vertex of $v_{n}$ is the minimum CSDS of $P_n^{(2,1)}$. \textbf{Case 2.3:} If $n \bmod 5 = 0$, then we observe that $D$ itself is the minimum CSDS of $P_{n}^{(2,1)}$ where $v_n$ is the co-vertex of $v_{n-1}$. Therefore, if $n \bmod 5 \in \{2,4,0\}$, $\gamma_{cs}(P_n^{(2,1)}) = \gamma_{cs}(P_{n-1}^{(2,1)})$, that is, $D$ is the minimum CSDS of $P_n^{(2,1)}$. \\
\textbf{Case 3}: If $n \bmod 5 = 3$, then it is clear that $n-1 \bmod 5 = 2$. By the induction hypothesis, $P_{n-1}^{(2,1)}$ has a minimum CSDS, $D$, $v_{n-1} \in D$, and $v_{n-2}$ is the co-vertex of $v_{n-1}$. Clearly, $v_n$ is not dominated by the co-vertex $v_{n-2}$. Note that $v_2 \in D$ is the largest indexed vertex that dominates the first vertex $v_1$ of $P_n^{(2,1)}$. Furthermore, for any vertex $v_i \in D$ where $d(v_i) = 5$, there exists a vertex $v_j \in D$ such that $|N(v_i) \cap N(v_j)| = 2$. Since property $N(v_i) \cap N(v_j) \neq \emptyset$ is minimally satisfied for all vertices of $D$, it is inevitable to include one of the vertex $v_n$ or $u_{n-2,1}$ to dominate $v_n$ in $P_n^{(2,1)}$. We consider $v_{n}$ in the minimum CSDS of $P_n^{(2,1)}$ and $u_{n-2,1}$ to be the co-vertex of $v_{n}$. Therefore, $D \cup \{v_n\}$ is the minimum CSDS of $P_n^{(2,1)}$ with $\gamma_{cs}(P_n^{(2,1)}) = \gamma_{CS}(P_{n-1}^{(2,1)}) + 1$.\\
In all cases, $
\gamma_{cs}(P_n^{(2,1)}) = 2 \left\lfloor \frac{n}{5} \right\rfloor + \varepsilon$, where $n \geq 4$ and $\varepsilon =
  \begin{cases}
    0, & \text{if } (n\bmod 5 = 0);\\
    1, & \text{if } (0 < n\bmod 5 < 3); \\
    2, & \text{if } (n\bmod 5 \geq 3).
  \end{cases}
$ and $D$ is the minimum CSDS of $P_n^{(2,1)}$. This completes the case analysis and proof of the Lemma \ref{pathsquarewith1break}.
\end{proof}

\subsection{CSDS in $P_{n}^{(2,2)}$}
\hspace{0.75cm}Next, we consider the square of a path graph with two breaks, $P_n^{(2,2)}$, and we give a bound for CSDS in such graphs. In Fig. \ref{CSDS_in_Squareofpath_with_two_break}, the red nodes represent the dominating vertices in the CSDS of $P_9^{(2,2)}$, while the blue nodes correspond to co-vertices associated with the adjacent dominating vertices. Next, we show a bound on $\gamma_{cs}(P_n^{(2,2)})$.
\begin{figure}[H] 
    \centering
    \includegraphics[scale=0.8]{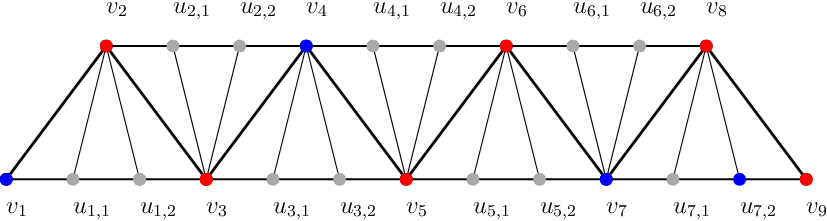}
     \caption{CSDS in $P_9^{(2,2)}$}
    \label{CSDS_in_Squareofpath_with_two_break}
 \end{figure}
\begin{lemma}\label{pathsquarewith2break}
$
\gamma_{cs}(P_n^{(2,2)}) = 2 \left\lfloor \frac{n}{3} \right\rfloor + \varepsilon$, where $n \geq 3$ and $\varepsilon =
  \begin{cases}
    0, & \text{if } n\bmod 3 \in \{0,1\}; \\
    1, & \text{if } n\bmod 3 =2.
  \end{cases}
$
\end{lemma}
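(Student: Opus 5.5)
The plan follows the template of Lemma \ref{pathsquare} and Lemma \ref{pathsquarewith1break}: give an explicit set $D$ of path vertices with period $3$, exhibit a co-vertex for each member, and then prove minimality by induction on $n$.

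\textbf{Structure of $P_n^{(2,2)}$.} The path vertices are $v_1,\dots,v_n$. The breaks are $u_{i,1},u_{i,2}$ for $1\le i\le n-2$, and $N[u_{i,1}]=\{v_i,u_{i,1},u_{i,2},v_{i+1}\}$, $N[u_{i,2}]=\{u_{i,1},u_{i,2},v_{i+1},v_{i+2}\}$. A break vertex $u_{i,j}$ is therefore dominated only by $v_{i+1}$, by one of $v_i,v_{i+2}$, or by the breaks themselves. Since $v_{i+1}$ dominates both breaks on the segment $i$, any dominating set must meet each set $\{v_i,u_{i,1},u_{i,2},v_{i+1}\}$ and $\{u_{i,1},u_{i,2},v_{i+1},v_{i+2}\}$, which forces a density of path vertices.

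\textbf{Construction.} Take
\[ D=\{v_x \mid x \bmod 3 = 2,\ 1<x\le n\}\cup\{v_y \mid y\bmod 3=0,\ y\le n\}, \]
adjusting the last block depending on $n \bmod 3$ (add $v_n$ when $n\bmod 3=2$ if needed). The count should be $2\lfloor n/3\rfloor+\varepsilon$. For each $v_{3t+2}$ the co-vertex is $v_{3t+1}$, and for each $v_{3t+3}$ the co-vertex is $v_{3t+4}$ (or a break at the end). One checks that after each swap all breaks of the segments $3t$, $3t+1$, $3t+2$ remain dominated. This works because consecutive chosen vertices $v_{3t+2},v_{3t+3}$ cover each other's breaks: segment $3t+1$ is dominated by both, segment $3t+2$ by $v_{3t+3}$ (and $v_{3t+2}$ via $u_{3t+2,1}$), and segment $3t$ by $v_{3t+2}$ (as middle vertex).

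\textbf{Lower bound.} I would use induction on $n$ with base cases $3\le n\le 5$ checked by hand, and add a counting argument rather than the informal ``minimally satisfied'' step. The key claim is that every window of three consecutive path indices $\{3t+1,3t+2,3t+3\}$, together with its incident breaks, requires at least two vertices of a CSDS. One vertex cannot dominate all breaks of segments $3t$ through $3t+2$ except via $v_{3t+2}$ or $v_{3t+3}$. Swapping that vertex out for any neighbour then leaves some break undominated, because break vertices have degree $3$ and their closed neighbourhoods overlap only with adjacent segments. This mirrors the degree claim in the earlier proofs: a vertex of $D$ with no other $D$-vertex sharing its neighbourhood cannot have a valid replacement.

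\textbf{Main obstacle.} The hard step is making the local two-per-window claim rigorous when a single vertex of $D'$ lies on a window boundary and could be charged to two windows. I would handle this by assigning each break pair $u_{i,1},u_{i,2}$ to the segment $i$ and using a discharging argument in which each vertex of $D'$ is charged to the segment $i$ it dominates most. The residue cases $n\bmod 3\in\{0,1\}$ versus $2$ then come from the end segment, where $v_n$ has few neighbours and an extra vertex is needed only when two unmatched path vertices remain.
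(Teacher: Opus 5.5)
\textbf{Verdict: genuine gap in the lower bound.} Your upper-bound construction is the paper's set $D$ with the same co-vertices, and it is correct. The verification sentence has index slips: $v_{3t+2}$ is the middle vertex of segment $3t+1$, not of segment $3t$; of segment $3t$ it dominates only $u_{3t,2}$, while $u_{3t,1}$ is covered by $v_{3t}$. Also, $v_{3t+3}$ dominates only $u_{3t+1,2}$ in segment $3t+1$, and $v_n$ is already in $D$ when $n\bmod 3=2$.

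The gap is that the counting step does not go through as stated. Your window consists of $v_{3t+1},v_{3t+2},v_{3t+3}$ together with all incident breaks, namely $u_{3t-1,2},u_{3t,1},\ldots,u_{3t+3,1}$. Consecutive windows therefore share $u_{3t+2,2}$ and $u_{3t+3,1}$. A CSDS vertex placed there is counted twice, so ``at least two per window'' does not add up to $|D'|\ge 2\lfloor n/3\rfloor$. You name this as the main obstacle, but the remedy (charge each vertex to the segment it dominates most) is never specified, and nothing shows it keeps two charges per window. The justification of the local claim is also wrong as written. No single vertex dominates all six breaks of segments $3t,3t+1,3t+2$: a path vertex dominates only four consecutive breaks, and $v_{3t+3}$ misses $u_{3t+1,1}$. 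So the reason one vertex per window is not enough is a failure of the co-vertex condition, not of covering, and the proposal never carries that argument out.

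\textbf{How to close it.} Choose the blocks disjoint from the start. Let $W_t=\{v_{3t+1},v_{3t+2},v_{3t+3}\}\cup\{u_{i,1},u_{i,2} : 3t\le i\le 3t+2\}$, so each segment goes to the block containing its middle vertex $v_{i+1}$.
For $n=3m$ the blocks $W_0,\ldots,W_{m-1}$ partition $V$. For $n=3m+1$ the leftover is $\{v_n\}$. For $n=3m+2$ the leftover is $\{v_{n-1},v_n,u_{n-2,1},u_{n-2,2}\}\supseteq N[v_n]$, which supplies the $+1$.

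The closed neighbourhoods of $u_{3t,2},u_{3t+1,1},u_{3t+1,2},u_{3t+2,1}$ all lie inside $W_t$. Suppose $D'\cap W_t=\{x\}$.
\begin{itemize}
\item $x$ must dominate $u_{3t+1,1}$ and $u_{3t+1,2}$, so $x\in\{v_{3t+2},u_{3t+1,1},u_{3t+1,2}\}$.
\item A break choice leaves $u_{3t,2}$ and $u_{3t+2,1}$ undominated.
\item If $x=v_{3t+2}$, then $u_{3t,2}$ and $u_{3t+2,1}$ are private neighbours of $x$. Since $N[u_{3t,2}]\cap N[u_{3t+2,1}]=\{v_{3t+2}\}$, $x$ has no co-vertex.
\end{itemize}

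At the ends one of these two breaks is missing: $u_{0,2}$ in the first block, and $u_{n-1,1}$ in the last block when $n=3m$. There the remaining one still excludes a break choice. In the co-vertex step, $v_1$ (respectively $v_n$) takes the role of the missing break, because $N[v_1]\cap N[u_{2,1}]=\{v_2\}$ and $N[u_{n-3,2}]\cap N[v_n]=\{v_{n-1}\}$. The case $n=3$ is checked directly.

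Summing $|D'\cap W_t|\ge 2$ over the blocks gives the bound with no induction. Once completed this way, your counting route differs from the paper's. The paper argues by induction that a minimum CSDS of $P_n^{(2,2)}$ extends one of $P_{n-1}^{(2,2)}$, and the block count avoids that extension assumption.
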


\begin{proof}
Let $v_1,v_2,\ldots,v_n$ be the path vertices and $u_{i,1}$ and $u_{i,2}$ be the break vertices between $v_i$ and $v_{i+2}, 1 \leq i \leq n-2$ where $d(v_1)=2, d(v_n) = 2$, $d(u_{i,j})=3$, $1\leq i \leq n-2, j\in\{1,2\}$.

Let $D= \bigcup\limits_{\forall x,y}\{\{v_{x},v_{y}\}~\mid~x~\text{mod}~3=2, y~\text{mod}~3=0, 1 < x,y\le n\}$.
 For every $v_x \in D, x ~\text{mod}~3 = 2$, $v_{x-1}$ is the co-vertex of $v_x$.  For every $v_y \in D, y ~\text{mod}~3 = 0$, $v_{y+1}$ is the co-vertex of $v_y$ except for the case where $y = n$. If $y = n$, then $u_{n-2,2}$ is the co-vertex of $v_y$. Therefore, $D$ is a CSDS of $P_n^{(2,2)}$. We also observe that for every $x$, $5 \leq x \leq n$ where $x \bmod 3 = 2$, $\exists y = x-2$ such that $v_x$ and $v_y$ have the same co-vertex. We use induction on $n$ to show that $D$ is a minimum CSDS of $P_n^{(2,2)}$.  
\textit{Base case:} $\gamma_{cs}(P_n^{(2,2)}) = 2, n\in \{3,4\}$ and $\gamma_{cs}(P_n^{(2,2)}) = 3, n = 5$. It is easy to observe that $D$ is the minimum CSDS for $P_n^{(2,2)}, n \le 5$. \textit{Induction hypothesis:} Assume that $
\gamma_{cs}(P_{n-1}^{(2,2)}) = 2 \left\lfloor \frac{n-1}{3} \right\rfloor + \varepsilon$, where $n-1 \geq 4$ and $\varepsilon =
  \begin{cases}
    0, & \text{if } n\bmod 3 \in \{0,1\}; \\
    1, & \text{if } n\bmod 3 =2.
  \end{cases}
$ and $D$ is the minimum CSDS for all square of the path graphs with two breaks having the number of path vertices less than $n$, where $n-1 \ge 4$. \textit{Induction step}: To show that D is a minimum CSDS of $P_n^{(2,2)}$, we first observe that for any vertex $v_{i} \in D$ where $d(v_{i}) = 6$, there should exist a vertex $v_{j} \in D$ such that $v_iv_j \in E(P_n^{(2,2)})$. Let $D'$ be an arbitrary minimum CSDS such that $v_{i} \in D'$  , $d(v_{i}) = 6$. Suppose that there does not exist $v_{j} \in D'$  , $i \neq j$, $v_iv_j \in E(P_n^{(2,2)})$. Let $v_{k}$ be the co-vertex of $v_{i}$ and clearly $v_{k} \in N(v_{i})$. If $k > i$, then note that $ D' \setminus \{v_{i}\} \cup \{v_{k}\}$ will not dominate $u_{i-1,1}$ and $u_{i - 2, 2}$. Similarly, $k < i$, then note that $ D' \setminus \{v_{i}\} \cup \{v_{k}\}$ will not dominate $u_{i,1}$ and $u_{i - 1, 2}$, a contradiction to the definition of $D'$. Therefore, for any vertex $v_{i} \in D'$ where $d(v_{i}) = 6$, there should exist a vertex $v_{j} \in  D'$ such that $v_iv_j \in E(P_n^{(2,2)})$. Consider the vertex $v_{n}$ of $P_n^{(2,2)}$ where $d(v_{n}) = 2$. There are three cases.\\
\textbf{Case 1:} If $n \bmod 3 = 1$, then it is clear that $v_{n-1} \bmod 3 = 0$. By the induction hypothesis, $P_{n-1}^{(2,2)}$ has a minimum CSDS, $D$, $v_{n-1} \in D$ and $u_{n-2,2}$ is the co-vertex of $v_{n-1}$. We observe that $D$ itself is the minimum CSDS of $P_n^{(2,2)}$ where $v_n$ is the co-vertex of $v_{n-1}$. Hence $\gamma_{cs}(P_n^{(2,2)}) = \gamma_{cs}(P_{n-1}^{(2,2)})$.\\
\textbf{Case 2:} if $n \bmod 3 = 2 $, then by the induction hypothesis, $P_{n-1}^{(2,2)}$ has a minimum CSDS, $D$, $v_{n-2} \in D$ and $v_{n-1}$ is the co-vertex of $v_{n-2}$. Clearly $v_{n}$ and $u_{n-2,2}$ are not dominated by $v_{n-2}$. Note that $v_2 \in D$ is the largest indexed vertex that dominates the first vertex $v_1$ of $P_n^{(2,2)}$. Furthermore, for any vertex $v_{i} \in D$ where $d(v_{i}) = 6$, there should exist a vertex $v_{j} \in  D$ such that $v_iv_j \in E(P_n^{(2,2)})$. Since the above property is satisfied for all vertices, it is inevitable to include one of the vertex $u_{n-2,2}$ or $v_{n}$ to dominate $v_{n}$ and $u_{n-2,2}$ in $P_n^{(2,2)}$. We consider $v_{n}$ in the minimum CSDS of $P_n^{(2,2)}$ and $u_{n - 2, 2}$ to be the co-vertex of $v_{n}$. Therefore, $D \cup \{v_n\}$ is the minimum CSDS of $P_n^{(2,2)}$ with $\gamma_{cs}(P_n^{(2,2)}) = \gamma_{cs}(P_{n-1}^{(2,2)}) + 1$. \\
\textbf{Case 3:} if $n \bmod 3 = 0$, then by the induction hypothesis, $P_{n-1}^{(2,2)}$ has a minimum CSDS, $D$, $v_{n-1} \in D$ and $v_{n-2}$ is the co-vertex of $v_{n-1}$. Clearly $v_n$ and $u_{n-2,2}$ are not dominated by the co-vertex $v_{n-2}$. Note that $v_2 \in D$ is the largest indexed vertex that dominates the first vertex $v_1$ of $P_n^{(2,2)}$. Furthermore, for any vertex $v_{i} \in D$ where $d(v_{i}) = 6$, there should exist a vertex $v_{j} \in  D$ such that $v_iv_j \in E(P_n^{(2,2)})$. Since the above property is satisfied for all vertices, it is inevitable to include one of the vertex $u_{n-2,2}$ or $v_{n}$ to dominate $v_n$ and $u_{n-2,2}$ in $P_n^{(2,2)}$.
We consider $v_{n}$ in the minimum CSDS of $P_n^{(2,2)}$ and $u_{n - 2, 2}$ to be the co-vertex of $v_{n}$. Therefore, $D \cup \{v_n\}$ is the minimum CSDS of $P_n^{(2,2)}$ with $\gamma_{cs}(P_n^{(2,2)}) = \gamma_{cs}(P_{n-1}^{(2,2)}) + 1$.
In all cases,
$\gamma_{cs}(P_n^{(2,2)}) = 2 \left\lfloor \frac{n}{3} \right\rfloor + \varepsilon$, where $n \geq 3$ and $\varepsilon =
  \begin{cases}
    0, & \text{if } n\bmod 3 \in \{0,1\}; \\
    1, & \text{if } n\bmod 3 =2.
  \end{cases}
$ and $D$ is the minimum CSDS of $P_n^{(2,2)}$. 
This completes the case analysis and proof of the Lemma \ref{pathsquarewith2break}.
\end{proof}
 
\subsection{CSDS in $P_{n}^{(2,k)}, k \geq 3$}
\hspace{0.75cm}We next consider the square of a path graph with more than two breaks, $P_n^{(2,k)}$ with $k \geq 3$, and give a bound for CSDS in such graphs. For graphs $P_n^{(2,k)}$ with $k \geq 3$ we observe that choosing the path vertices as dominating vertices is optimal. This selection not only dominates their corresponding break vertices but also the break vertices associated with adjacent path vertices.

With this observation, we can first take the path vertex as the dominating vertex and the third break vertex as its co-vertex, and then for every three break vertices we can take the middle break vertex as the dominating vertex and the break vertex just before it as its co-vertex until we reach the last break vertex. An important point to note is that we do this for each path vertex except the simplicial vertices and the vertex adjacent to simplicial vertices.

Given $P_{n}^{(2,k)}, k \geq 3$, we define $D = P \cup X_1 \cup X_2 \cup X_3$
where $P = \{ v_i \mid 2 \leq i \leq {n-1} \}$ $X_1 = \{ u_{1,j} \mid 3 \leq j \leq k, j \bmod 3 = 1\}$,
$X_2 = \{ u_{{n-2},j}, 3 \leq j \leq {k-1}, j \bmod 3 = 0\}$ and $X_3 = \{ u_{i,j} \mid 2 \leq i  \leq n-3, 4 \leq j \leq k, j \bmod 3 = 0\}$

For each path vertex, $v_i \in D, 3 \leq i \leq n-2$, $u_{i-1,3}$ is the co-vertex of $v_i$. For $v_2 \in D$, $u_{1,1}$ is the co-vertex and for $v_{n-1}$, $u_{n-2,k}$ is the co-vertex. Furthermore, for every break vertex, $u_{i,j} \in D$, $u_{i,j-1}$ is the co-vertex of $u_{i,j}$. Note that when $k = 3$, that is, for $P_n^{(2,3)}$, $X_i = \emptyset$, for every $i$, $i \in \{1,2,3\}$ and also for $P_n^{(2,4)}$ and $P_n^{(2,5)}$, $X_3 = \emptyset$. 
     
\begin{lemma}\label{pathsquarewithmorethan2break}
Given $P_{n}^{(2,k)}, n \ge 4, k \geq 3$, $\gamma_{cs}( P_n^{(2,k)}) = \left\lfloor \frac{k}{3} \right\rfloor \cdot (n - 4) + \left \lceil \frac{k}{3} \right \rceil \cdot 2$. 
\end{lemma}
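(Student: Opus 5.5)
The proof has two halves. First I will show that the set $D=P\cup X_1\cup X_2\cup X_3$ defined before the lemma is a CSDS of the stated size. Then I will prove a matching lower bound with a block-partition counting argument rather than the induction used for Lemmas \ref{pathsquare}--\ref{pathsquarewith2break}.

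\textbf{Upper bound.} Counting, $|P|=n-2$ and $|X_1|=|X_2|=\lfloor (k-1)/3\rfloor$. Each of the $n-4$ break paths with $2\le i\le n-3$ contributes $|\{4\le j\le k: j\equiv 0 \pmod 3\}|=\lfloor k/3\rfloor-1$ vertices to $X_3$. Hence
$$|D|=(n-4)\lfloor k/3\rfloor+2+2\lfloor (k-1)/3\rfloor=(n-4)\lfloor k/3\rfloor+2\lceil k/3\rceil,$$
using $\lceil k/3\rceil=1+\lfloor (k-1)/3\rfloor$. Next I will check domination and the listed co-vertices. The vertex $v_{i+1}$ dominates all of $u_{i,1},\dots,u_{i,k}$, so every vertex is dominated by $P$ alone, except that $v_1$ is dominated by $v_2$. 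For co-security, replacing $v_{i+1}$ by $u_{i,3}$ leaves $u_{i,1},\dots,u_{i,4}$ covered by $u_{i,3}$ (together with $v_i$ and $v_{i+2}$ at the ends). The remaining $u_{i,j}$, $j\ge 5$, are covered by the chosen break vertices at positions $j\equiv 0 \pmod 3$ on that path. The other vertices of $P$ are unaffected. The two end cases ($v_2$ with co-vertex $u_{1,1}$, using $X_1$; and $v_{n-1}$ with co-vertex $u_{n-2,k}$, using $X_2$) are checked the same way. A break vertex $u_{i,j}\in D$ is swapped with $u_{i,j-1}$ harmlessly, since all of its neighbours are still dominated by path vertices of $P$.

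\textbf{Lower bound.} Partition $V$ into the blocks $B_i=\{u_{i,1},\dots,u_{i,k},v_{i+1}\}$ for $1\le i\le n-2$, and put $v_1$ into $B_1$ and $v_n$ into $B_{n-2}$. The key structural fact is the following. For $2\le j\le k-1$, $N[u_{i,j}]=\{u_{i,j-1},u_{i,j},u_{i,j+1},v_{i+1}\}\subseteq B_i$, so the inner break vertices can be dominated only from inside their own block. Let $D'$ be any CSDS. For a middle block ($2\le i\le n-3$) I distinguish two cases.
\begin{itemize}
\item If $v_{i+1}\notin D'$, then the path $u_{i,2},\dots,u_{i,k-1}$ must be dominated by $D'\cap\{u_{i,1},\dots,u_{i,k}\}$. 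This forces $|D'\cap B_i|\ge \lceil (k-2)/3\rceil$.
\item If $v_{i+1}\in D'$, let $w$ be its co-vertex. After the swap, $w$ covers at most three of $u_{i,2},\dots,u_{i,k-1}$, and none of them if $w\notin B_i$. The rest must be covered by $D'\cap B_i\setminus\{v_{i+1}\}$, giving $|D'\cap B_i|\ge 1+\lceil (k-5)/3\rceil$.
\end{itemize}
Both quantities equal $\lceil (k-2)/3\rceil=\lfloor k/3\rfloor$.

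For the end block $B_1$, the vertices $v_1,u_{1,1},\dots,u_{1,k-1}$ form a path of $k$ vertices. All their closed neighbourhoods lie in $B_1$, because $N(v_1)=\{v_2,u_{1,1}\}$. The same two-case argument then gives $|D'\cap B_1|\ge\lceil k/3\rceil$: either $\lceil k/3\rceil$ directly, or $1+\lceil (k-3)/3\rceil$ when $v_2\in D'$. The block $B_{n-2}$ is symmetric. Summing over the blocks gives $|D'|\ge (n-4)\lfloor k/3\rfloor+2\lceil k/3\rceil$.

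The step I expect to be delicate is the boundary bookkeeping. I need to verify that the closed neighbourhoods claimed to lie inside a block really do, that is, identify exactly which vertex of $B_i$ each $v_{i+1}$ and $u_{i,1},u_{i,k}$ is adjacent to. I also need to handle the small cases $k\in\{3,4,5\}$, where $\lceil (k-5)/3\rceil$ may be $0$ or negative and must be replaced by $0$. Finally, the case $n=4$ has no middle block, so I must check that $B_1$ and $B_{n-2}$ still behave as claimed there; the edge structure at $v_2,v_3$ needs a direct check in that case.
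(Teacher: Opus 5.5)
Your proof is correct. The upper bound is the paper's own: the same set $D=P\cup X_1\cup X_2\cup X_3$ with the same co-vertices. Your size count via $\lceil k/3\rceil=1+\lfloor (k-1)/3\rfloor$ is right. When you swap out $v_{i+1}$, say explicitly that its neighbours $u_{i-1,k}$ and $u_{i+1,1}$ outside $B_i$ stay dominated by $v_i$ and $v_{i+2}$.

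For minimality you take a genuinely different route. The paper inducts on $n$: it appends $\lfloor k/3\rfloor$ vertices to its specific set for $P_{n-1}^{(2,k)}$, with cases on $k \bmod 3$, and reads off $\gamma_{cs}(P_n^{(2,k)})=\gamma_{cs}(P_{n-1}^{(2,k)})+\lfloor k/3\rfloor$. That step only tracks how the construction grows. It never argues that an arbitrary CSDS of $P_n^{(2,k)}$ must be that large, and its base cases $n=4,5$ are merely asserted.

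Your partition into $B_1,\dots,B_{n-2}$ gives a local lower bound that holds for every CSDS at once and needs no base cases. It also explains the constants. Co-security only matters when $v_{i+1}\in D'$, and then the co-vertex absorbs at most three inner break vertices. So the cost $1+\lceil (k-5)/3\rceil$ equals the cost $\lceil (k-2)/3\rceil$ of avoiding $v_{i+1}$.

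The points you flagged are harmless:
\begin{itemize}
\item For $k\ge 3$ we have $(k-5)/3>-1$, so $\lceil (k-5)/3\rceil\ge 0$ and no correction is needed.
\item The containments $N[v_1]=\{v_1,v_2,u_{1,1}\}$, $N[u_{1,1}]=\{v_1,u_{1,1},u_{1,2},v_2\}$, and $N[u_{i,j}]=\{u_{i,j-1},u_{i,j},u_{i,j+1},v_{i+1}\}$ for $2\le j\le k-1$ come straight from the edge list.
\item The only neighbours of $v_2$ outside $B_1$ are $v_3$ and $u_{2,1}$, and neither is adjacent to any of $v_1,u_{1,1},\dots,u_{1,k-1}$. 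So the end-block argument works verbatim, including when $n=4$.
\end{itemize}
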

\begin{proof}
We first prove that for $D=P \cup X_1 \cup X_2 \cup X_3$ (as defined earlier),  
$N[D]=V$. For every path vertex $v_i$, $3 \leq i \leq n-2$,
\[
N[v_i]=\{v_i,v_{i-1},v_{i+1}, u_{i-2, k}, u_{i, 1}\}\cup \{ u_{i-1,j}, 1\le j\le k\}
\]
Therefore, for $P=\{ v_i \mid 2 \leq i \leq n-1\}$, $ N[P]= \{ v_i, 1 \leq i \leq n\} \cup \{u_{i,j}, 1 \leq i \leq n-2,  1 \leq j \leq k\}$. Hence, $N[P]=V$. Since $P \subseteq D$, we can say that $N[D]=V$.

Next, we show that for every $u \in D$, $\exists v \in V \setminus D$, such that $uv \in E(P_n^{(2,k)})$ and $(D \setminus \{u\}) \cup \{v\}$ is a dominating set. Dominating vertices can either be a path vertex or a break vertex; however, co-vertex is always a break vertex. Taking into account these possibilities, we analyze the following two cases. \\
\textbf{Case 1:} Both dominating vertex and its co-vertex are break vertices. Consider a pair $(u_{i,j}, u_{i,j-1})$ where $1 \leq i \leq n-2, 3 \leq j \leq k$ represents the dominating vertex and its corresponding co-vertex, respectively. Replacement of $u_{i,j}$ by $u_{i,j-1}$ is trivial, as $N[u_{i,j}]$ and $N[u_{i,j-1}]$ are both subsets of $N[v_{i+1}]$. In addition, since $v_{i+1} \in D$, for every $i$, $1 \leq i \leq n$, the domination of the graph remains unaffected before and after the replacement.
\textbf{Case 2:} The dominating vertex is a path vertex, and its corresponding co-vertex is a break vertex. Consider a path vertex $v_i$, where $2 \leq i \leq n-1$. There are three sub-cases for its corresponding co-vertex. \textbf{Case 2.1:} $(v_i, u_{i-1,3})$ where $3 \leq i \leq n-2$.
To analyze the replacement of $v_i$ by $u_{i-1,3}$, recall that $N[v_i]=\{v_i,v_{i-1},v_{i+1}, u_{i-2, k}, u_{i, 1}\}\cup \{ u_{i-1,j} | 1\le j\le k\}$. Also, given that $v_{i-1}$, and $v_{i+1}$ are part of the dominating set, $D$, they continue to dominate $v_{i-1}$, $v_{i+1}$, $u_{i-2,k}$, $u_{i-1,1}$, $u_{i-1,k}$  and $u_{i,1}$. In addition, the co-vertex $u_{i-1,3}$ dominates $u_{i-1,3}$ and its immediate neighbors $u_{i-1,2}$ and $u_{i-1,4}$. Hence, the break vertices in $\{ u_{i-1,j} \mid 5 \leq j \leq k-1\}$ are left to be dominated. This is essentially a path graph having $k-5$ vertices. From \cite{dominatingSet} we know that a path graph of $n$ vertices requires $\frac{n}{3}$ dominating vertices to be dominated, therefore we will need at least $\frac{k-5}{3}$ dominating vertices in the path on break vertices. That is, we include vertices $u_{i-1,j}$, where $6 \leq j \leq k $ and $j \bmod 3 = 0$ in $D$, each dominating their adjacent vertices $u_{i-1,j-1}$ and $u_{i-1,j+1}$. Thus, all vertices in $N[v_i]$ remain dominated before and after replacement. \textbf{Case 2.2:} $(v_2, u_{1,1})$. To analyze the replacement of $v_2$ by $u_{1,1}$, recall that $N[v_2] = \{v_1, v_3\} \cup \{u_{1,j} \mid 1\leq j \leq k\}$. Also, given $v_1 \notin D$. Hence, to dominate $v_1$, it is necessary that $u_{1,1}$ be the co-vertex of $v_2$.  Thus, the co-vertex $u_{1,1}$ dominates $u_{1,1}$ and its immediate neighbors $v_1$ and $u_{1,2}$. Hence, the break vertices in $\{ u_{1,j} \mid 3 \leq j \leq k-1\}$ are left to be dominated. This is essentially a path graph having $k-3$ vertices. Therefore, we will need at least $\frac{k-3}{3}$ dominating vertices in the path on break vertices. That is, we include vertices $u_{1,j}$, where $4 \leq j \leq k $ and $j \bmod 3 = 1$ in $D$, each dominating their adjacent vertices $u_{1,j-1}$ and $u_{1,j+1}$. Thus, all vertices in $N[v_2]$ remain dominated before and after replacement. \textbf{Case 2.3:} $(v_{n-1}, u_{n-2,k})$. To analyze the replacement of $v_{n-1}$ by $u_{n-2,k}$, recall that $N[v_{n-1}] = \{v_n, v_{n-2}\} \cup \{u_{n-2,j} \mid 1\leq j \leq k\}$. Also, given $v_n \notin D$. Hence, to dominate $v_n$, it is necessary that $u_{n-2,k}$ be the co-vertex of $v_{n-1}$.  Thus, the co-vertex $u_{n-2,k}$ dominates $u_{n-2,k}$ and its immediate neighbors $v_n$ and $u_{n-2,k-1}$. Hence, the break vertices in $\{ u_{n-2,j} \mid 2 \leq j \leq k-2\}$ are left to be dominated. This is essentially a path graph having $k-3$ vertices. Therefore, we will need at least $\frac{k-3}{3}$ dominating vertices in the path on break vertices. That is, we include vertices $u_{n-2,j}$, where $3 \leq j \leq k-1$ and $j \bmod 3 = 0$ in $D$, each dominating their adjacent vertices $u_{n-2,j-1}$ and $u_{n-2,j+1}$. Thus, all vertices in $N[v_{n-1}]$ remain dominated before and after replacement.

In all cases, replacing a dominating vertex with its co-
vertex preserves its dominating set property in the graph, and hence $D$ is a CSDS of $P_n^{(2,k)}, k \geq 3$. Next, we use induction on $n$ to show the minimality of $D$. \textit{Base case:}   $\gamma_{cs}(P_4^{(2,k)}), k \geq 3$ is $\left \lceil \frac{k}{3} \right \rceil \cdot 2$, and $\gamma_{cs}(P_5^{(2,k)}), k \geq 3$ is $\left \lfloor \frac{k}{3} \right \rfloor$ + $\left \lceil \frac{k}{3} \right \rceil \cdot 2$. It is easy to observe that $D$ is the minimum CSDS for $\gamma_{cs}(P_n^{(2,k)}), k \geq 3$, $n \leq 5$. \textit{Induction hypothesis:} Assume that $\gamma_{cs}( P_{n-1}^{(2,k)}) = \left\lfloor \frac{k}{3} \right\rfloor \cdot (n - 5) + \left \lceil \frac{k}{3} \right \rceil \cdot 2$, where $n-1 \ge 4, k \geq 3$ and $D$ is the minimum CSDS for all square of path graphs with $k$ breaks, where $k \geq 3$ and having the number of path vertices less than $n$, where $n-1 \geq 4$. \textit{Inductive step:} To show that $D$ is the minimum CSDS of $P_n^{(2,k)}, k \geq 3$, Consider the vertex $v_n$ and $k$ break vertices, $u_{n-2,j}, 1 \leq j \leq k$ of $P_n^{(2,k)}$. By the induction hypothesis, $P_{n-1}^{(2,k)}$ has a minimum CSDS, $D$, $v_{n-1} \notin D$, $u_{n-3,k}$ is the co-vertex of $v_{n-2}$ and the break vertices, $\{u_{n-3,j} \mid 3 \leq j \leq k-1, j \bmod 3 = 0\} \in D$.  Next, consider two cases:\\
\textbf{Case 1:} if $k \bmod 3 = 0$, then $u_{n-3,3} \in D$ is now the co-vertex of $v_{n-2}$ and $u_{n-3,k} \in D$. In addition, $v_{n-1} \in D$ and $u_{n-2,k}$ is the co-vertex of $v_{n-1}$. Hence, the break vertices $u_{n-2,j}, 2 \leq j \leq k-2$ are not dominated when $v_{n-1}$ is replaced by the co-vertex $u_{n-2,k}$. Therefore,  $\left\lfloor \frac{k}{3} \right\rfloor -1$ break vertices, that is, $u_{n-2,j}, 3 \leq j \leq k-1, j \bmod 3 =0$ are included in $D$. Hence $\gamma_{cs}(P_n^{(2,k)}) = \gamma_{cs}(P_{n-1}^{(2,k)}) + \left\lfloor \frac{k}{3} \right\rfloor =  \left\lfloor \frac{k}{3} \right\rfloor \cdot (n - 5) + \left \lceil \frac{k}{3} \right \rceil \cdot 2 + \left\lfloor \frac{k}{3} \right\rfloor = \left\lfloor \frac{k}{3} \right\rfloor \cdot (n - 4) + \left \lceil \frac{k}{3} \right \rceil \cdot 2 $ .
\textbf{Case 2:} $k \bmod 3 \neq 0$, then $u_{n-3,3} \in D$ is now the co-vertex of $v_{n-2}$ and $v_{n-1} \in D$ and $u_{n-2,k}$ is the co-vertex of $v_{n-1}$. Hence, the break vertices $u_{n-2,j}, 2 \leq j \leq k-2$ are not dominated when $v_{n-1}$ is replaced by the co-vertex $u_{n-2,k}$. Therefore,  $\left\lfloor \frac{k}{3} \right\rfloor$ break vertices, that is, $u_{n-2,j}, 3 \leq j \leq k-1, j \bmod 3 =0$ are included in $D$. Hence $\gamma_{cs}(P_n^{(2,k)}) = \gamma_{cs}(P_{n-1}^{(2,k)}) + \left\lfloor \frac{k}{3} \right\rfloor =  \left\lfloor \frac{k}{3} \right\rfloor \cdot (n - 5) + \left \lceil \frac{k}{3} \right \rceil \cdot 2 + \left\lfloor \frac{k}{3} \right\rfloor = \left\lfloor \frac{k}{3} \right\rfloor \cdot (n - 4) + \left \lceil \frac{k}{3} \right \rceil \cdot 2$.
In all cases, $\gamma_{cs}( P_n^{(2,k)}) = \left\lfloor \frac{k}{3} \right\rfloor \cdot (n - 4) + \left \lceil \frac{k}{3} \right \rceil \cdot 2$ and $D$ is the minimum CSDS of $P_n^{(2,k)}$, $n \ge 4, k \geq 3$. 
This completes the case analysis and proof of the Lemma \ref{pathsquarewithmorethan2break}. 
\end{proof}

\section{CSDS in Split graphs}\label{section:four}
\hspace{0.75cm}Split graphs are a well-known subclass of chordal graphs.  In this section, we explore MCSDS in split graphs. The Complexity of $K_{1,r}$-free split graphs, $r\ge 3$ have been explored earlier in context of Dominating set, Steiner tree, Hamiltonian cycle\cite{RENJITH2020246,HamDichotomy}, etc.  Most of these problems have a polynomial-time solution in $K_{1,4}$-free split graphs and NP-complete in $K_{1,r}$-free split graphs, $r\ge 5$.  Unlike those instances, we produce a linear-time algorithm for MCSDS in $K_{1,3}$-free split graphs and prove that CSDSD is NP-complete in $K_{1,r}$-free split graphs, $r\ge 4$, which is the major highlight of this section. Next, we present a lemma that captures the upper bound of $\gamma_{cs}$ in a split graph. 
\begin{lemma}
     For every connected split graph $G = (V,E)$ with split partition $(C, I)$, $\gamma_{cs}(G) \leq |I| + 1$. Moreover, this is strict when for every $x \in I, d(x) = 1$ and for every $v \in C$, $N^{I}_{G}[v]$ is $K_{1,r}, r \geq 1$ with at most one vertex, say $u \in C$ such that $N^{I}_{G}[u]$ is $K_{1,1}$ and at least one vertex, say $w \in C$ such that $d^{I}_{G}(w) = 0$.        
\end{lemma}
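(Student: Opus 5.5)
The plan is to prove the bound $\gamma_{cs}(G)\le |I|+1$ with an explicit construction, split by how the pendant vertices of $I$ sit in $C$. Then, for the second sentence (read as saying the bound is attained under the stated conditions), I would give a counting lower bound that forces $|I|+1$ vertices.

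\emph{Upper bound.} I first dispose of the degenerate cases.
\begin{itemize}
\item If $I=\emptyset$, then $G=K_{|C|}$. Assuming $|C|\ge 2$, any single vertex with any other vertex as co-vertex is a CSDS, so $\gamma_{cs}(G)=1$.
\item If $|C|=1$, then $G=K_{1,r}$ with centre $c$. Take $D=I$. It dominates $c$, and for each $x\in I$ the swap $(I\setminus\{x\})\cup\{c\}$ dominates everything, so $c$ is a co-vertex of every $x$.
\end{itemize}
Now assume $|C|\ge 2$ and $I\neq\emptyset$, and call $w\in C$ \emph{bad} if some $x\in I$ has $N(x)=\{w\}$.
\begin{itemize}
\item If some $w\in C$ is not bad, set $D=I\cup\{w\}$, so $|D|=|I|+1$. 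For $w$, choose any $v\in C\setminus\{w\}$ as co-vertex: $I\cup\{v\}$ dominates $C$ through $v$ and dominates $I$ trivially. For $x\in I$, $x$ has a neighbour $y\in C\setminus\{w\}$ because $w$ is not bad. Then $(I\setminus\{x\})\cup\{w,y\}$ dominates $C$ via $w$ and dominates $x$ via $y$.
\item If every vertex of $C$ is bad, then $D=I$ already dominates $C$, since each $c\in C$ has a pendant neighbour in $I$. For $x\in I$ pick any neighbour $y\in C$; then $(I\setminus\{x\})\cup\{y\}$ dominates $C$ through $y$ and dominates $x$. This gives $|D|=|I|$.
\end{itemize}
In every case $\gamma_{cs}(G)\le |I|+1$.

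\emph{Tightness.} Under the hypotheses, every $x\in I$ is pendant. Write $S\subseteq C$ for the set of support vertices and $P_s$ for the pendants of $s$, with $|P_s|=r_s\ge 1$. At most one $s$ has $r_s=1$, and some $w_0\in C$ has no $I$-neighbour. Let $D$ be any CSDS.

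The key local claim is $|D\cap(P_s\cup\{s\})|\ge r_s$ for every $s\in S$. If $s\notin D$, all of $P_s$ must be in $D$. If $s\in D$ and some pendant of $s$ is missing from $D$, then swapping out $s$ must keep that pendant dominated, so the co-vertex of $s$ must be that pendant. Hence at most one pendant of $s$ is absent from $D$, giving $|D\cap(P_s\cup\{s\})|\ge 1+(r_s-1)=r_s$. Summing over $s\in S$ gives $|D|\ge |I|$.

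To gain the extra $1$, suppose $|D|=|I|$. Then equality holds in every group and $D\cap(C\setminus S)=\emptyset$. Since $w_0$ must be dominated, some $s\in S$ lies in $D$, and equality in its group means exactly one pendant $x\in P_s$ is outside $D$, with $x$ forced to be the co-vertex of $s$. After swapping $s$ for $x$, the vertex $w_0$ must still be dominated, so some other $s'\in S$ lies in $D$ with the same structure. The same argument applied to the swap of $s'$ then requires $w_0$ to remain dominated, which holds because $s\in D$.

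So the real contradiction must come from a dominating-vertex $x'\in P_{s'}\cap D$ or $x''\in P_s\cap D$. Such a pendant has only one neighbour, namely its support, and that support is already in $D$, so the pendant has no co-vertex outside $D$. This is impossible. The only escape is $r_s=r_{s'}=1$ for every such $s$, and the hypothesis allows at most one vertex with $N^I_G[u]=K_{1,1}$. Therefore $|D|\ge |I|+1$.

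I expect this last step to be the main obstacle: pinning down exactly how the "at most one $K_{1,1}$" and "some $w$ with $d^I_G(w)=0$" conditions rule out every configuration of size $|I|$. It will need a careful case check of which supports lie in $D$ and which pendants serve as co-vertices.
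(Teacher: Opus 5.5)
Your proof is correct, and it differs from the paper's in both halves.

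For the bound $\gamma_{cs}(G)\le |I|+1$, the paper constructs nothing. It cites the general inequality $\gamma_{cs}(G)\le\beta_0(G)$ of Arumugam et al.\ and notes that an independent set of a split graph contains at most one vertex of $C$, so $\beta_0(G)\le |I|+1$. Your case split on ``bad'' clique vertices is self-contained and exhibits the co-vertices explicitly. It even gives $\gamma_{cs}(G)\le |I|$ when every vertex of $C$ supports a pendant. The paper's route is shorter but relies on the cited result.

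For the second sentence, the paper only exhibits $I\cup\{w\}$ with its co-vertices. It then asserts that this set is minimum without any lower-bound argument. Your per-support counting supplies exactly that missing half, so on this point your proof is more complete than the paper's.

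Your closing worry is unnecessary, because the argument you wrote already finishes the job with no case check. It can be stated cleanly as follows.
\begin{itemize}
\item A pendant that lies in $D$ together with its support has no co-vertex. So every support $s\in D$ has $P_s\cap D=\emptyset$.
\item The single co-vertex of such an $s$ must then dominate all of $P_s$. This forces $r_s=1$ and makes that pendant the co-vertex.
\item Hence, by the ``at most one $K_{1,1}$'' hypothesis, $|D\cap S|\le 1$.
\item If $|D|=|I|$, your counting gives $D\cap C\subseteq S$. So $w_0$ is dominated by a unique support $s\in D$.
\item Swapping $s$ for its pendant then leaves $w_0$ undominated, a contradiction.
\end{itemize}
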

\begin{proof}
For every connected split graph $G = (V,E)$ with split partition $(C,I) $, the set $I \cup \{w\}$, where $w \in C$ is such that $N^I(w) = \emptyset$ is always a maximum independent set of $G$. Also, it is not possible for two vertices saying $u,v \in C$ to be in a maximum independent set. Therefore, for every connected split graph $G = (V,E)$ with split partition $(C,I)$, the maximum independent number of $G$, $\beta_{0}(G) \leq |I| + 1$. For any non-trivial connected graph $G$, $\gamma_{cs}(G) \leq \beta_{0}(G)$, and the bound is sharp \cite{co-secureDS}. Hence, for every connected split graph $G = (C \cup I, E)$, $\gamma_{cs}(G) \leq |I| + 1$. Moreover, if for every $x \in I, d(x) = 1$ and for every $v \in C$, $N^{I}_{G}[v]$ is $K_{1,r}, r \geq 1$ with at least one vertex, say $w \in C$ such that $N^I(w) = \emptyset$ and there exist no two vertices, say $u,v \in C$ such that $N_G^I[u]$ and $N_G^I[v]$ are $K_{1,1}$. Then, $D =  I \cup \{w\}$ is the minimum CSDS of $G$ with $\gamma_{cs}(G) = |I| + 1$. Then for each $x \in I$, a unique vertex, say $v \in C$ such that $xv \in E(G)$ is the co-vertex and any vertex, say $u$ where $N_G^I(u) \neq \emptyset$ is the co-vertex for $w$.
\end{proof}

\subsection{A polynomial-time algorithm to find CSDS in $K_{1,3}$-free Split Graphs} \hspace{0.75cm}Let $G = (V,E)$ be a connected split graph with split partition $(C, I)$ that forbids $K_{1,3}$ as an induced subgraph. Let $D$ denote the minimum CSDS of $G$. Next, we present an algorithm for finding a minimum CSDS in $K_{1,3}$-free split graph in Algorithm \ref{alg:one}.

\subsection{Proof of correctness of Algorithm \ref{alg:one}} \label{proofofcorrectnessofalgo2}
As per Lemma \ref{deltaleq2},  We consider the following two cases.\\[1mm]
\textbf{Case 1:} $\Delta{}_G^I = 1$.
\textbf{Case 1.1:} If $|I| = 1$ say $I = \{u\}$, since $G$ has at least $3$ vertices, $|C| \geq 2$.
\textbf{Case 1.1.1} If $d(u) = 1$, then $D =\{u,x\}$, where $x \in C$ such that $ux \notin E(G)$. Hence $D$ is a minimum CSDS with $\gamma_{cs}(G) = 2$ since the vertex $y$ such that $uy \in E(G)$, is the co-vertex of both $u$ and $x$. 
\textbf{Case 1.1.2} If $d(u) > 1$ ,then there exist at least two vertices, say $x$ and $y$ in $C$ such that $xu \in E(G)$ and $yu \in E(G)$. Then $D =\{x\}$, co-vertex of $x$ is $y$. Hence $D$ is a minimum CSDS with $\gamma_{cs}(G) = 1$.  
\textbf{Case 1.2:}
If $|I| > 1$, recall that since $G$ is connected, $|C| \geq |I|$. Let $I =\{ u_1,u_2,\ldots,u_m\}$ and $C= \{v_1,v_2,\ldots,v_n\}$. Clearly $n \geq m$. Here we could define $D$ as follows: For each $u_i \in I$, a unique vertex, say $v_i \in C$ such that $u_iv_i \in E(G)$ is in $D$ and the co-vertex of $v_i$ is $u_i$. Hence $D$ is a minimum CSDS with $\gamma_{cs}(G) =|I|$.\\
\textbf{Case 2:} $\Delta{}_G^I = 2$.
According to Lemma \ref{deltaeq3}, $|I| =2$ or $|I| = 3$.
\textbf{Case 2.1} If $|I| = 2$, say $I = \{u,v\}$.
Since $C$ is maximum, clearly $|C|\ge 2$.
\textbf{Case 2.1.1:} If $|C| \geq 2$ and there exist at least two vertices $x,y \in C$ such that $d_{G}^{I}(x)=d_{G}^{I}(y) = 2$.  In this case $D=\{x\}$, $y$ is the co-vertex of $x$. Hence $D$ is a minimum CSDS with $\gamma_{cs}(G) = 1$.
\textbf{Case 2.1.2:} If $|C| \geq 2$ and there exists exactly one vertex, say $x\in C$ with $d^I_G(x)= 2$.  For $y\in C$, $D=\{x,y\}$. According to Lemma \ref{deltaleq2}, $N^{I}_{G}(x) \cap N^{I}_{G}(y) \neq \emptyset $. Then $v$ is the co-vertex of $x$ and $u$ is the co-vertex of $y$, if $u \in N^{I}_{G}(x) \cap N^{I}_{G}(y)$. Hence $D$ is a minimum CSDS with $\gamma_{cs}(G) = 2$. \\
\textbf{Case 2.2:} if $|I| = 3$, say $I = \{u,v,w\}$, there exist at least two vertices say $x$ and $y$ in $C$ such that $N^{I}_{G}(x) \cup N^{I}_{G}(y) = I$, then $D =\{x,y\}$. Let $N^{I}_{G}(x) =\{u,v\}$ and $w \in N^{I}_{G}(y)$. As per Lemma \ref{deltaleq2}, $N^{I}_{G}(x) \cap N^{I}_{G}(y) \neq \emptyset$. That is, either $u \in N^{I}_{G}(x) \cap N^{I}_{G}(y)$ or $v \in N^{I}_{G}(x) \cap N^{I}_{G}(y)$. In the first case $v$ is the co-vertex of $x$ and $w$ is the co-vertex of $y$. In the second case $u$ is the co-vertex of $x$ and $w$ is the co-vertex of $y$. Hence $D$ is a minimum CSDS with $\gamma_{cs}(G) = 2$.
\begin{algorithm}
\renewcommand{\thealgorithm}{1}
\caption{\textbf{Algorithm 1}}
\label{alg:one}
\caption*{\textbf{Algorithm 1} CSDS in $K_{1,3}$-free split graphs}
\label{alg:two}
\textit{Input:}
Claw-free split graph $G = (V,E)$ with split partition $(C, I)$  and $|V(G)| \geq 3$.\\
\textit{Output:}
Minimum CSDS, $D$ of $G$.\\
\begin{algorithmic}[1]
\If{$\Delta{}_G^I = 1$} 
    \If{$|I| = 1$ say $I =\{u\}$}
        \If{$d(u) = 1$}
            \State $D =\{u,x\}$ where $x \in C$ such that $N_{G}^{I}(x) = \emptyset$ 
        \EndIf
        \If{$d(u) > 1$}
             \State  $D =\{v\}$, where $v\in C$ such that $N_{G}^{I}(v) \neq \emptyset$
        \EndIf
     \EndIf
     \If{$|I| > 1$}
        \State For each $u \in I$, a unique vertex say $x \in C$ such that $ux \in E(G)$ is in $D$ 
     \EndIf
\ElsIf{$\Delta{}_G^I = 2$} \Comment{$|I|=2$ or $|I|=3$}
     \If{$|I| = 2$ say $I=\{u,v\}$}
        \If{$|C| \geq 2$ and $\exists x,y \in C$ such that $d_{G}^{I}(x)= d_{G}^{I}(y) = 2$}
            \State $D= \{x\}$
        \EndIf
        \If{$|C| \geq 2$ and there exists exactly one vertex say $x \in C$ such that $d_{G}^{I}(x)= 2$}
            \State $D= \{x,y\}$ where $y \in C$, $y \neq x$
        \EndIf        
    \EndIf
    \If{$|I| = 3$ say $I=\{u,v,w\}$} \Comment{$|C| \geq 3$}
        \State $D = \{x,y\}$ where $x,y \in C$ and $N^{I}_{G}(x) \cup N^{I}_{G}(y) = I$\$  
    \EndIf
\EndIf 
\end{algorithmic}
\end{algorithm}
\begin{lemma}
     Let $G = (V,E)$ be a claw-free split graph with split partition $(C, I)$ . Computing a minimum CSDS in $G$ is linear-time solvable.
\end{lemma}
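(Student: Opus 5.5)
The lemma follows once we show that Algorithm \ref{alg:one} is correct, as argued in Section \ref{proofofcorrectnessofalgo2}, and that it runs in $O(|V|+|E|)$ time. I will organize the argument around the dichotomy of Lemma \ref{deltaleq2}: a claw-free split graph has either $\Delta_G^I\le 1$, or $\Delta_G^I=2$ with $|I|\le 3$ by Lemma \ref{deltaeq3}. For correctness, in each branch I will check two things: that the output $D$ is a CSDS, by exhibiting explicit co-vertices, and that it is minimum.

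Minimality splits into two kinds of cases.

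\textbf{Cases where $|D|=1$.} A single vertex cannot be beaten, so nothing more is needed.

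\textbf{Cases where $|D|=2$.} I will argue that no single vertex $z$ is a CSDS. If $z\in I$, then $z$ cannot dominate the other vertices of $I$ (or of $C\setminus N(z)$). If $z\in C$, then $z$ must dominate all of $I$, and its co-vertex $z'$ must also dominate all of $I$ after replacing $z$. If $z'\in I$, then $I=\{z'\}$ and every vertex of $C$ is adjacent to $z'$, so $d(z')>1$. If $z'\in C$, then $z'$ is a second vertex with $N^I(z')=I$. Either way we land in Case 1.1.2 or Case 2.1.1, which are exactly the cases where the algorithm outputs one vertex. In particular, in Case 1.1.1 no $C$-vertex other than the neighbour of $u$ sees $u$, so the bound $2$ is tight.

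\textbf{Case 1.2: $\Delta_G^I=1$, $|I|\ge 2$.} Here each vertex of $C$ dominates at most one vertex of $I$. A CSDS $D$ must dominate every $u\in I$, either by containing $u$ or by containing a $C$-neighbour of $u$. These dominators are distinct for distinct $u$, since each $C$-vertex has at most one $I$-neighbour and $I$ is independent. Hence $|D|\ge|I|$, which matches the output. For the co-vertex check I will verify that replacing $v_i$ by $u_i$ keeps everything dominated. Because $|I|\ge 2$, some other chosen $v_j\in C$ still dominates $C$. The vertex $u_i$ is dominated by itself, and every other $u_j$ is dominated by its own $v_j$.

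\textbf{Running time.} Computing $d_G^I(v)$ for all $v\in C$ takes $O(|V|+|E|)$ by scanning adjacency lists once, given the split partition. If the partition is not given, a split partition with maximal clique can be found in linear time using degree sequences (Hammer--Simeone), and I will cite this. After that, each branch is a constant number of scans. In Case 1.2 we pick, for each $u\in I$, one neighbour. In Case 2.2 we find $x$ with $d^I(x)=2$ and any $y$ adjacent to the missing vertex of $I$, which is constant work since $|I|\le 3$.

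The main obstacle I expect is the boundary handling in Case 2. I need to confirm that the chosen $y$ in Case 2.1.2 and the pair in Case 2.2 always exist and that the stated co-vertices are genuinely valid. This uses the intersection condition of Lemma \ref{deltaleq2}. The co-vertex of $y$ must be an $I$-vertex not adjacent to $y$ that is still covered, and I will check each subcase explicitly.
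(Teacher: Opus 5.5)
Your proposal is correct and follows the paper's own route: the case analysis of Algorithm~\ref{alg:one}, organized by Lemmas~\ref{deltaleq2} and~\ref{deltaeq3}. It also supplies what the paper leaves implicit, namely the exclusion of single-vertex CSDSs in the size-$2$ cases, the counting bound $|D|\ge |I|$ in Case 1.2, and the linear-time accounting. One small slip in your last paragraph: a co-vertex must be adjacent to the vertex it replaces, so in Case 2.1.2 it is the co-vertex of $x$ that is the $I$-vertex $v$ not adjacent to $y$, while the co-vertex of $y$ is the common $I$-neighbour $u\in N_G^I(x)\cap N_G^I(y)$.
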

\begin{proof}
    Follows from our discussion presented in section \ref{proofofcorrectnessofalgo2}
\end{proof}

\subsection{Hardness of CSDS in $K_{1,4}$-free split graph}
\hspace{0.75cm} In the previous section, we showed that MCSDS is polynomial-time solvable on $K_{1,3}$-free split graphs. On the hardness side, the reduction from EXACT-3-COVER to STREE presented in \cite{RENJITH2020246} generates instances of  $K_{1,5}$-free split graphs and yields a minimum Steiner set that coincides with a minimum dominating set of the resulting graph. Consequently, the reduction immediately implies the following result.

\begin{corollary}
Dominating Set is NP-complete on $K_{1,5}$-free split graphs.
\end{corollary}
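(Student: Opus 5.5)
We will prove the corollary by a direct polynomial reduction from EXACT-3-COVER (X3C), the same source problem used in \cite{RENJITH2020246}. Membership in NP is immediate: given $D$ with $|D|\le k$, we check in polynomial time that $N[D]=V$.

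\textbf{Construction.} Take an X3C instance with universe $U=\{x_1,\ldots,x_{3q}\}$ and a family $\mathcal{S}=\{S_1,\ldots,S_m\}$ of 3-element subsets. Build a split graph $G$ with clique $C=\{s_1,\ldots,s_m\}$, one vertex per set, and independent set $I=\{x_1,\ldots,x_{3q}\}$. Join $s_j$ to $x_i$ exactly when $x_i\in S_j$, and set $k=q$.

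\textbf{Claw-freeness of order five.} Every vertex of $I$ has its neighbours inside the clique $C$, so an induced star centred at a vertex of $I$ has at most one leaf. An induced star centred at $s_j\in C$ has at most one leaf in $C$, since $C$ is a clique. Its remaining leaves lie in $N^I(s_j)$, which has exactly $3$ vertices. Hence every induced star has at most $4$ leaves, and $G$ is $K_{1,5}$-free.

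\textbf{Correctness.} If $\mathcal{S}$ has an exact cover $\mathcal{S}'$ with $|\mathcal{S}'|=q$, the corresponding $s_j$ dominate all of $I$, and any one of them dominates all of $C$. So $\gamma(G)\le q$. For the converse, let $D$ be a dominating set with $|D|\le q$. We normalize $D$ by replacing each $x_i\in D$ with some clique neighbour $s_j$ of $x_i$. This does not increase $|D|$ and keeps $D$ dominating, because $N[x_i]\subseteq N[s_j]$. We may assume every $x_i$ lies in some set, since otherwise the instance is trivially a no-instance. After normalization $D\subseteq C$, and $D$ must cover all $3q$ vertices of $I$ using at most $q$ vertices, each covering exactly $3$. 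This forces exactly $q$ chosen sets that are pairwise disjoint, which is an exact cover.

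\textbf{Main obstacle.} The step needing care is the normalization argument. It is what identifies a minimum dominating set with the minimum Steiner-type set $D\subseteq C$ covering $I$ in the construction of \cite{RENJITH2020246}. Once that identification holds, the rest is the counting argument above.
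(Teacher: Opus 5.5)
Your proof is correct and follows essentially the paper's route. The paper only invokes the X3C-to-Steiner-tree reduction of Renjith and Sadagopan and asserts, without proof, that its instances are $K_{1,5}$-free split graphs whose minimum Steiner sets coincide with minimum dominating sets; you instead build the set--element incidence split graph explicitly and prove both facts yourself, using the leaf count for induced stars and the normalization of $D$ into $C$ followed by the $3q$-counting argument.
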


Furthermore, Kusum and Pandey \cite{Kusum_Complexity} presented a polynomial-time reduction from Dominating Set to Co-Secure Dominating Set on split graphs. Combining the above corollary with their reduction yields that CSDSD is NP-complete on $K_{1,5}$-free split graphs.

This naturally raises the question of the complexity of CSDSD on $K_{1,4}$-free split graphs. In the remainder of this section, we answer this question by presenting a new reduction and proving that CSDSD remains NP-complete even on $K_{1,4}$-free split graphs. Consequently, we obtain a complexity dichotomy: CSDSD is polynomial-time solvable on $K_{1,3}$-free split graphs and NP-complete on $K_{1,4}$-free split graphs.
We consider the following problem to prove the hardness. \\[3mm]
\fbox{\begin{minipage}{34em}
\textit{$P_3$-path partitioning(G)}:\\[1mm]
Instance: A graph $G = (V,E)$ on $3p$ vertices \\
Question: Does there exists a $P_3$-path partitioning of $G$ on $p$ paths?
\end{minipage}}
\\

The reduction in \cite{steiner2003k}, in fact, uses the EXACT-3-COVER problem to show that the $P_3$ path partitioning is NP-complete on comparability graphs.  We use this as a candidate problem to show our reduction. 
\begin{theorem}
    \textsc{CSDSD} is NP-complete on $\Delta^{I}= 2$ split graphs.
\end{theorem}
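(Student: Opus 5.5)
Membership in NP is immediate: given $D$ with $|D|\le k$, we can check domination and, for each $u\in D$, try every neighbour $v\notin D$ and test whether $(D\setminus\{u\})\cup\{v\}$ dominates. All of this is polynomial. For hardness we reduce from $P_3$-path partitioning on a graph $H=(V_H,E_H)$ with $|V_H|=3p$.

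\textbf{Construction.} Build a split graph $G$ whose clique is $C=V_H$ and whose independent set $I$ has one vertex $e_{xy}$ for every edge $xy\in E_H$, with $N(e_{xy})=\{x,y\}$. Every independent vertex then has degree $2$. Moreover $d_G^I(x)=d_H(x)$, which is not bounded by $2$. So the statement's ``$\Delta^I=2$'' must be read as $d_G(z)\le 2$ for every $z\in I$; this is also what keeps $G$ $K_{1,4}$-free in the intended setting, provided $H$ has small degree. If we insisted on bounded clique-side degree instead, we would first restrict to bounded-degree instances of the comparability-graph reduction of \cite{steiner2003k}, where $\Delta(H)\le 3$. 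Set $k=p$. A $P_3$ $a\!-\!b\!-\!c$ corresponds to the choice $b\in D$. The intended structure is a set $D\subseteq C$ with $|D|=p$ such that every edge of $H$ has an endpoint in $D$ and each $b\in D$ has a private replacement structure.

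\textbf{Main obstacle.} The bare incidence graph does not make the equivalence work. Domination of $I$ by $D\subseteq C$ is vertex cover, not path partition. So I would add gadgets that force the following: every $x\in V_H$ is either in $D$ or is replaced in a controlled way, and each $b\in D$ has exactly two ``partner'' neighbours in $V_H$. I would try the following.

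For each $x\in V_H$, add a pendant-like independent vertex $t_x$ adjacent to $x$ and to a fresh clique vertex $x'$. Replace each edge vertex $e_{xy}$ by an independent vertex adjacent to $x'$ and $y$ (degree $2$). Take $k=p+$ the gadget overhead. The intent is that a CSDS must cover the $t_x$'s, and each swap out of a chosen vertex must be absorbed by exactly the two path partners.

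\textbf{Correctness.} For the forward direction, the path centers together with fixed gadget vertices form a CSDS of size $k$. The co-vertex of a center $b$ is the independent vertex of one of its path edges, and the other edge keeps the neighbourhood dominated. For the backward direction, counting shows each $b\in D\cap V_H$ must serve exactly three vertices $t_\cdot$ without overlap. Overlaps are excluded because the co-security condition forces a second dominator for each of $b$'s private independent neighbours, as in the $N(v_i)\cap N(v_j)\ne\emptyset$ argument used for Lemma \ref{pathsquare}. This yields a partition into $P_3$'s.

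Making this counting tight, so that the structure is genuinely $P_3$'s and not stars or vertex covers, is the step I expect to need the most iteration. My first adjustment would be tuning the gadget adjacency so that each clique vertex has $d^I\le 3$ and the budget $k$ is exact.
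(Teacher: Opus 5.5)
There is a genuine gap, on two levels. First, you have changed the target class. In this paper $\Delta_G^I=\max\{d_G^I(v)\mid v\in C\}$ bounds the number of \emph{independent neighbours of clique vertices}. It puts no condition on the degrees of vertices in $I$. Your base graph ($C=V_H$, $I=E_H$) has $d_G^I(x)=d_H(x)$. Your gadgeted graph still has clique vertices ($x'$, or the unprimed endpoints) whose $I$-degree grows with $d_H$. So neither graph is a $\Delta^I=2$ split graph, and restricting to $\Delta(H)\le 3$ bounds $\Delta^I$ by a constant, but not by $2$. Your reinterpretation ``$d_G(z)\le 2$ for $z\in I$'' also does not give $K_{1,4}$-freeness. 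A clique vertex with three independent neighbours, together with any clique vertex missing all three, induces $K_{1,4}$. It is exactly the clique-side bound $\Delta^I\le 2$ that makes the $K_{1,r}$-free corollary ($r\ge 4$) follow.

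Second, the reduction itself is missing. The budget $k$ is left as ``$p$ plus overhead'', and it is ambiguous which endpoint of $xy$ gets primed. Your backward claim that each $b\in D\cap V_H$ ``serves three $t$-vertices'' cannot hold: $N(t_x)=\{x,x'\}$, so every vertex dominates at most one $t$-vertex, and the $t_x$ alone already force $3p$ vertices into $D$. You correctly diagnose that an incidence construction only encodes vertex cover, but you never supply the mechanism that forces ``exactly two partners''.

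The idea you are missing, which the paper uses, is to put the vertices of $H$ on the \emph{independent} side. Each edge $v_iv_j$ is routed through two subdivided paths $P_{ij},P_{ji}$ of the form $v_i,v^1,\dots,v^7,v_j$. The odd-indexed vertices go in the clique and the even-indexed ones are independent, so every clique vertex has exactly two independent neighbours whatever $d_H$ is. Co-security then does the work:
\begin{itemize}
\item The private vertices $v^2,v^4,v^6$ force at least two vertices per path.
\item The only co-secure pair is $\{v^3,v^5\}$, and it dominates no $H$-vertex.
\item One extra vertex, e.g.\ $\{v^1,v^3,v^7\}$, dominates $v_i$ ``securely'' but $v_j$ only ``insecurely'', because replacing $v^7$ needs a second dominator of $v_j$.
\end{itemize}
With budget $4|F|+2p$, i.e.\ $2/3$ extra per $H$-vertex, the affordable pattern is a middle vertex covered insecurely from two paths and two ends covered securely. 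That is precisely a $P_3$-partition. The paper's overhead count of $2$ extra vertices per pair of consecutive paths asserts this optimality; it is argued only informally there, but that is the mechanism your construction lacks.
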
 
\begin{proof} As an instance of CSDSD in $\Delta^I=2$-split graphs can be verified in polynomial-time, CSDSD belongs to NP. Now, to prove the NP-hardness, We provide a polynomial-time reduction from the NP-complete $P_3$ path partition problem to CSDSD in the following way. Let $H = (U,F)$ with $|U| = 3p$ be an instance of the $P_3$ path partition problem. From $H$, we construct a $\Delta^{I}= 2$ split graph, $G = (V,E)$ with split partition $(B, T)$ where $B$ is the maximum clique and $T$ is an independent set, as follows. For every edge $v_iv_j\in F$, the edge gadget consists of two paths $P_{ij}=(v_i,v_{ij}^1,v_{ij}^2,v_{ij}^3,v_{ij}^4,v_{ij}^5,v_{ij}^6,v_{ij}^7,v_j)$ and $P_{ji}=(v_j,v_{ji}^1,v_{ji}^2,v_{ji}^3,v_{ji}^4,v_{ji}^5,v_{ji}^6,v_{ji}^7,v_i)$.  $V(G)= \bigcup\limits_{v_iv_j\in F} (V(P_{ij}) \cup V(P_{ji}))$ such that for every edge $v_iv_j\in F$, $\{v_i,v_{ij}^2,v_{ij}^4,v_{ij}^6,v_j,$ $v_{ji}^2,v_{ji}^4,v_{ji}^6\}\subset T$ and $\{v_{ij}^1,v_{ij}^3,v_{ij}^5,v_{ij}^7,$ $v_{ji}^1,v_{ji}^3,v_{ji}^5,v_{ji}^7\} \subset B$. $E(G)= \bigcup\limits_{v_iv_j\in F} (E(P_{ij}) \cup E(P_{ji})) \cup E_{clique}$ where $E_{clique} = \{ xy \mid x,y \in B\}$.  Fig. \ref{reduction} illustrates the construction of $G$ from $H$. Note that all the vertices inside the boxes together form a clique and the clique edges are omitted in the figure for the sake of simplicity. 
\begin{figure}[H]
    \centering
     \includegraphics[scale=0.575]{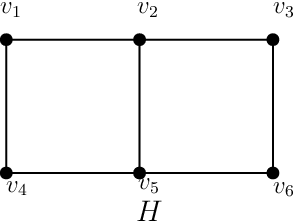}\\[5mm]
    \includegraphics[scale=0.565]{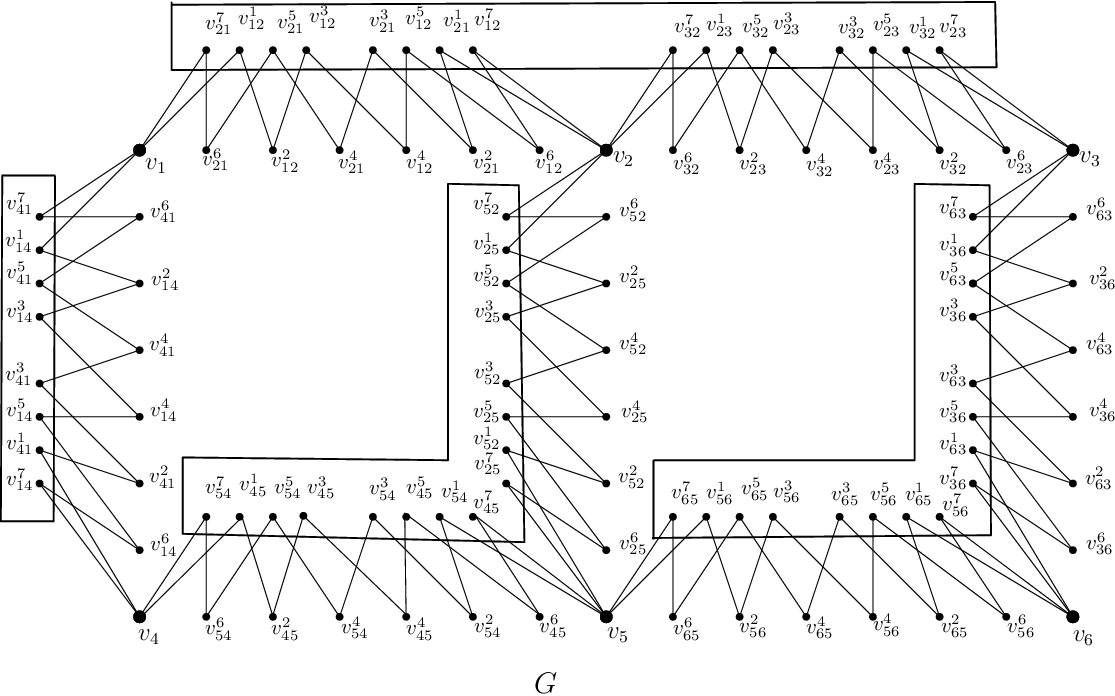}
     \caption{Illustrating the construction of graph $G$ from a graph $H$.}
    \label{reduction}
 \end{figure}

We now show that $H$ has a $P_3$ path partitioning on at most $p$-paths if and only if $G$ has a CSDS on at most $4|F|+2p$ vertices. \textit{Necessity:} Let the given instance $H = (U,F)$ with $|U|=3p$ have a $P_3$ path partitioning $Q=\{Q_1,\ldots Q_p\}$ such that $Q_i=(v_{a},v_{b},v_{c})$ is a path $P_3$. $E(Q)=\bigcup\limits_{Q_i\in Q} E(Q_i)$. Then $ D =\{ \{v_{ab}^1,v_{ab}^3,v_{ab}^7,v_{bc}^1,v_{bc}^5,v_{bc}^7,$\\$v_{ba}^3,v_{ba}^5,v_{cb}^3,v_{cb}^5 ~\mid~ Q_i\in Q \}$ $\cup$ $\{v_{ij}^3,v_{ij}^5,v_{ji}^3,v_{ji}^5 ~\mid~v_iv_j\notin E(Q)\} \}$ forms a CSDS on $4|F|+ 2p$ vertices in $G$. \textit{Sufficiency:} Suppose $D$ is a minimum CSDS of $G= (V,E)$ with split partition $(B,T)$ with $|D| = 4|F|+2p$. If there exist $v\in T\cap D$ and $u\in N(v)$, then $D\setminus\{v\}\cup \{u\}$ is also a minimum CSDS of $G$.  Therefore, we consider $D\subseteq B$. Note that for every edge $v_iv_j$ of $H$, we need at least 4 vertices of $B$ to be included in any minimum CSDS, 2 each from $P_{ij}$ and $P_{ji}$.
In addition, from the previous observation, every edge $v_iv_j$ of $H$, $|D\cap P_{ij}|\ge 2$. If $\{v_{ij}^3,v_{ij}^5\}\subseteq D\cap P_{ij}$, then all the vertices of $P_{ij}$ except $v_i, v_j$ are dominated by $D\cap P_{ij}$. This is also a CSDS for the internal vertices of $P_{ij}$.  On the other hand, we can dominate all the vertices of $P_{ij}$ using 3 vertices of $P_{ij}\cap B$, however, if the end vertices of $P_{ij}$ are not dominated by any other vertex outside $P_{ij}$, then any such dominating set of size 3 will not be a CSDS. The other possibilities are to consider two paths, $P_{ij}$ and $P_{jk}$ at a time.  Here we need six vertices - $\{v_{ij}^1,v_{ij}^3,v_{ij}^7,v_{jk}^1,v_{jk}^5,v_{jk}^7\}$ of $P_{ij}\cup P_{jk}$ in a CSDS that dominates all the end vertices of those paths.  The additional 3 end vertices $\{v_i, v_j, v_k\}$ are dominated along with the internal vertices in the CSDS using six vertices instead of four, and hence the overhead per end vertex is $2/3$.  Similarly, if we consider $3$ or $4$ consecutive paths at a time, then the overhead per vertex is $3/4$, $4/5$, respectively.  For more than $4$ consecutive paths, this cycle repeats.  Therefore, the best choice out of all these is to have $2$ more vertex in the dominating set per two consecutive paths.  Therefore, to dominate the $3p$ vertices in $G$ which corresponds to the original $3p$ vertices of $H$, the six vertex configuration  (indices: 1,3,7,1,5,7) mentioned in two consecutive paths should be present.  This incurs additional $2p$ vertices other than the $4|F|$ vertices included initially. 
 Therefore, $|D|=4|F|+2p$ and the set of 2 consecutive paths used to cover all the $3p$ vertices form a path partition of $H$.  The corresponding edges form a $P_3$ path partition of size $p$ in $H$. 
\end{proof}
\noindent Since $\Delta^{I}= 2$ split graphs are $K_{1,r}$-free split graphs, $r\ge 4$, the following theorem follows.
\begin{theorem}
  \textsc{CSDSD} is NP-complete on $K_{1,r}$-free split graphs, $r\ge 4$.
\end{theorem}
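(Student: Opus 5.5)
The plan is to derive the statement directly from the preceding theorem, that \textsc{CSDSD} is NP-complete on $\Delta^{I}=2$ split graphs, by showing that such graphs form a subclass of $K_{1,r}$-free split graphs for every $r\ge 4$.

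Membership in NP is immediate. Given a candidate set $D$ with $|D|\le k$, we check in polynomial time that $D$ dominates $G$. Then, for each $u\in D$ and each $v\in N(u)\setminus D$, we test whether $(D\setminus\{u\})\cup\{v\}$ is dominating. This verification does not depend on the graph class.

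For hardness, the key structural step is the following claim: if $G$ is a split graph with partition $(C,I)$ and $\Delta^{I}_{G}\le 2$, then $G$ has no induced $K_{1,r}$ for any $r\ge 4$. Suppose $K_{1,r}$ is induced with center $x$ and leaves $y_1,\dots,y_r$. Since the leaves are pairwise nonadjacent, at most one leaf lies in the clique $C$.

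If $x\in I$, then every neighbour of $x$ lies in $C$, because $I$ is independent. Two leaves would then be adjacent, which is impossible for $r\ge 2$.

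Hence $x\in C$, and at least $r-1\ge 3$ leaves lie in $I$. These leaves belong to $N^{I}_{G}(x)$, so $d^{I}_{G}(x)\ge 3>\Delta^{I}_{G}$, a contradiction. Thus every $\Delta^{I}=2$ split graph is $K_{1,r}$-free for all $r\ge 4$; in fact it is $K_{1,4}$-free, and $K_{1,4}$-freeness implies $K_{1,r}$-freeness for $r\ge 4$ because $K_{1,4}$ is an induced subgraph of $K_{1,r}$.

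Finally, the reduction from $P_3$-path partitioning in the previous theorem produces instances $G$ that are split graphs with $\Delta^{I}=2$: each vertex of $B$ on a gadget path has exactly two path-neighbours in $T$. By the claim, these instances are $K_{1,r}$-free split graphs for every $r\ge 4$. The same reduction therefore shows NP-hardness on this larger class, and together with membership in NP this gives NP-completeness.

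The only point requiring care is that the claim must apply to the instances actually constructed, so the maximal clique and independent set need to be identified correctly. I will verify directly from the gadget that no $B$-vertex has three $T$-neighbours. In particular, the hubs $v_i$ lie in $T$, and each vertex $v^{1}_{ij}$ or $v^{7}_{ij}$ is adjacent only to one hub and one internal $T$-vertex.
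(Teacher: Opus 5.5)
Your proposal is correct and follows the paper's route: the paper deduces this theorem in one line from the preceding NP-completeness result on $\Delta^{I}=2$ split graphs, together with the observation that such graphs are $K_{1,r}$-free for every $r\ge 4$. Your write-up supplies the details the paper leaves implicit, namely the case analysis on the centre of an induced $K_{1,r}$, the passage from $K_{1,4}$-freeness to $K_{1,r}$-freeness, the NP-membership check, and the verification that the constructed instances actually have $\Delta^{I}=2$.
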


\section{Conclusion and future work} \label{section:five}
\hspace{0.75cm}In this paper, we established upper bounds for the CSDS in subclasses of 2-trees, $P_n^2$ and $P_n^{(2,k)}$ graphs. In addition, we presented an interesting dichotomy result: the CSDS is linear-time solvable on $K_{1,3}$-free split graphs, whereas it is NP-complete on $K_{1,r}$-free split graphs, $r\ge 4$. Similar to the bounds obtained for some subclasses of 2-trees, one can try to generalize the bounds for the minimum CSDS for 2-trees and further k-trees. Using the structural results presented here, an interesting direction for further research would be to explore the complexity of other related variants of dominating set such as Secure dominating set, Total CSDS, Connected CSDS, etc., restricted to split graphs.
\addcontentsline{toc}{chapter}{References}
\nocite{*}
\bibliographystyle{ieeetr}
\bibliography{bibFile}

@article{RENJITH2020246,
title = {The {S}teiner tree in ${K}_{1,r}$-free split graphs—A Dichotomy},
journal = {Discrete Applied Mathematics},
volume = {280},
pages = {246-255},
year = {2020},
issn = {0166-218X},
doi = {https://doi.org/10.1016/j.dam.2018.05.050},
url = {https://www.sciencedirect.com/science/article/pii/S0166218X18303111},
author = {P. Renjith and N. Sadagopan}
}

@article{Kusum_Complexity,
title = {Complexity Results on Cosecure Domination in Graphs},
journal = {Elsevier BV},
volume = {},
pages = {},
year = {2023},
note = {},
issn = {},
doi = {http://dx.doi.org/10.2139/ssrn.4469948},
url = {https://ssrn.com/abstract=4469948},
author = {Kusum and Arti Pandey},
}

@article{Alan,
  author  = "Alan A. Bertossi",
  title   = "Dominating sets for Split and Bipartite Graphs",
  journal = "Information Processing Letters",
  year    = 1984,
}

@article{DS_Chordal,
  author  = "Kellogg S, Booth, Howard Johnson",
  title   = "Dominating sets in Chordal graphs",
  journal = "Society for Industrial and Applied Mathematics",
  year    = 1982,
}

@article{DS_ConvexChordal,
  author  = "Peter Damaschke, Haiko Muller, Dieter Kratsch",
  title   = "Domination in Convex and Chordal Bipartite Graphs",
  journal = "Information Processing Letters",
  year    = 1990,
}

@article{MDtree,
  author  = {E. Cockayne, S. Goodman and S. Hedetniemi},
  title   = {A linear algorithm for the domination number of a tree},
  journal =  {Inform. Process. Lett.},
  pages = {41–44},
  year    = {1975},
}

@article{MDInterval,
  author  = {Derek G.Corneil and Lorna K.Stewart},
  title   = {Dominating sets in Perfect graphs},
  journal =  {Discrete Mathematics },
  pages = {145-164},
  year    = {1990},
}

@article{co-secureDS,
author = {Arumugam, S. and Ebadi, Karam and Manrique, Martin},
year = {2014},
month = {07},
pages = {},
title = {Co-Secure and Secure Domination in Graphs},
volume = {94},
journal = {Utilitas Mathematica}
}

@article{CSDD_split_chordalbipartite_convex,
title = {Some new algorithmic results on co-secure domination in graphs},
journal = {Theoretical Computer Science},
volume = {992},
pages = {114451},
year = {2024},
issn = {0304-3975},
doi = {https://doi.org/10.1016/j.tcs.2024.114451},
url = {https://www.sciencedirect.com/science/article/pii/S0304397524000665},
author = { Kusum and Arti Pandey}
}

@article{intervak,
title = {The co-secure domination in proper interval graphs},
journal = {Discrete Applied Mathematics},
volume = {311},
pages = {68-71},
year = {2022},
issn = {0166-218X},
doi = {https://doi.org/10.1016/j.dam.2022.01.013},
url = {https://www.sciencedirect.com/science/article/pii/S0166218X2200018X},
author = {Yun-Hao Zou and Jia-Jie Liu and Shun-Chieh Chang and Chiun-Chieh Hsu},
}

@article{Joseph,
title = {Bounds on Co-Secure Domination in Graphs},
journal = {International Journal of Mathematics Trends and Technology (IJMTT)},
volume = {55},
pages = {158-164},
year = {2018},
doi = {https://doi.org/10.14445/22315373/IJMTT-V55P520},
url = {https://ijmttjournal.org/public/assets/volume-55/number-2/IJMTT-V55P520.pdf},
author = {Aleena Joseph, V.Sangeetha},
}

@article{manjusha1,
author = {Pothuvath, Manjusha},
year = {2023},
month = {01},
pages = {289 - 297},
title = {Co-secure domination in mycielski graphs},
volume = {113},
journal = {Journal of Combinatorial Mathematics and Combinatorial Computing}
}

@INPROCEEDINGS{Manjusha2,
  author={P, Manjusha and Iyer, Radha Rajamani},
  booktitle={2022 IEEE 4th PhD Colloquium on Emerging Domain Innovation and Technology for Society (PhD EDITS)}, 
  title={Application of Co-Secure Domination in Sierpinski Networks}, 
  year={2022},
  volume={},
  number={},
  pages={1-2},
  doi={10.1109/PhDEDITS56681.2022.9955305}
}

@article{PANDA,
title = {On the complexity of co-secure dominating set problem},
journal = {Information Processing Letters},
volume = {185},
pages = {106463},
year = {2024},
issn = {0020-0190},
doi = {https://doi.org/10.1016/j.ipl.2023.106463},
url = {https://www.sciencedirect.com/science/article/pii/S0020019023001060},
author = {B.S. Panda and Soumyashree Rana and Sounaka Mishra},
}

@article{Manjusha3,
author = {Pothuvath, Manjusha and Iyer, Radha and Asiri, Ahmad and Somasundaram, Kanagasabapathi},
year = {2024},
month = {10},
pages = {3077},
title = {Co-Secure Domination in Jump Graphs for Enhanced Security},
volume = {12},
journal = {Mathematics},
doi = {10.3390/math12193077}
}

@article{intersectiongraph,
author = {Cai-Xia Wang, Yu Yang, Shou-Jun XuJ},
year = {2024},
month = {11},
journal = {Comp. Appl. Math},
volume = {44},
title = {The algorithm and complexity of co-secure domination in geometric intersection graphs},
doi = {https://doi.org/10.1007/s40314-024-02982-2}
}

@article{dominatingSet,
author = {Alikhani, Saeid and Peng, Yee-Hock},
year = {2009},
month = {05},
pages = {},
title = {Dominating Sets and Domination Polynomials of Paths},
volume = {2009},
journal = {International Journal of Mathematics and Mathematical Sciences},
doi = {10.1155/2009/542040}
}

@article{steiner2003k,
  title={On the k-path partition of graphs},
  author={Steiner, George},
  journal={Theoretical Computer Science},
  volume={290},
  number={3},
  pages={2147--2155},
  year={2003},
  publisher={Elsevier}
}

@article{HamDichotomy,
author = {Renjith, P. and Sadagopan, N.},
title = {Hamiltonian Cycle in $K_{1,r}$-Free Split Graphs — A Dichotomy},
journal = {International Journal of Foundations of Computer Science},
volume = {33},
number = {01},
pages = {1-32},
year = {2022},
doi = {10.1142/S0129054121500337},
URL = {https://doi.org/10.1142/S0129054121500337},
eprint = {https://doi.org/10.1142/S0129054121500337}
}

@book{west2001introduction,
  title     = {Introduction to Graph Theory},
  author    = {West, Douglas B.},
  year      = {2001},
  edition   = {2nd},
  publisher = {Prentice Hall},
  address   = {Upper Saddle River, NJ},
  isbn      = {0130144002}
}
\section*{Statements and Declarations}
\textbf{Funding:} The authors received no external funding for this work.\\[2mm]
\textbf{Author Contributions:} M.S.M. conceived the study, developed the theoretical results, performed the complexity analysis, and wrote the initial draft of the manuscript. P.R. supervised the research, contributed to the development of the results, and reviewed and edited the manuscript. Both authors reviewed and approved the final manuscript.\\[2mm]
\textbf{Data Availability:} No datasets were generated or analyzed during the current study.\\[2mm]
\textbf{Competing Interests:} The authors have no competing interests to declare.
\end{document}